\documentclass{article}
\usepackage{hyperref,amssymb,amsmath,graphicx,verbatim,amsthm,fullpage, xcolor}


\newtheorem{theorem}{Theorem}
\newtheorem{cor}[theorem]{Corollary}
\newtheorem{lemma}[theorem]{Lemma}
\newtheorem{prop}[theorem]{Proposition}
\newtheorem{claim}{Claim}
\newtheorem*{claim*}{Claim}

\newtheorem*{remark*}{Remark}
\newtheorem{remark}[theorem]{Remark}

\newtheorem{conjecture}{Conjecture}
\newtheorem{question}{Question}

\newcommand{\C}{\mathbb{C}}

\newcommand{\N}{\mathbb{N}}

\newcommand{\R}{\mathbb{R}}

\newcommand{\rk}{\mathrm{rk}}
\renewcommand{\sc}{\sigma}

\newcommand\eps{\epsilon}

\renewcommand\mod{\mathrm{~mod}}

\begin{document}

\title{On families of anticommuting matrices}
\author{Pavel Hrube\v{s}\thanks{\texttt{pahrubes@gmail.com}. Supported by  ERC grant FEALORA 339691.}}

\maketitle

\begin{abstract} Let $e_{1},\dots, e_{k}$ be complex $n\times n$ matrices such that $e_{i}e_{j}=-e_{j}e_{i}$ whenever $i\not=j$. 
We conjecture that
\begin{itemize}
\item $\rk(e_{1}^{2})+\rk(e_{2}^{2})+\cdots+\rk(e_{k}^{2})\leq O(n\log n)$.
\end{itemize}
We show that
\begin{enumerate} \item\label{abs:1} $\rk(e_{1}^{n})+\rk(e_{2}^{n})+\cdots+\rk(e_{k}^{n})\leq O(n\log n)$,
\item\label{abs:2} if $e_{1}^{2},\dots, e_{k}^{2}\not=0$ then $k\leq O(n)$, 
\item\label{abs:3} if $e_{1},\dots, e_{k}$ have full rank, or at least $n-O(n/\log n)$, then $k=O(\log n)$.
\end{enumerate}
\ref{abs:1} implies that the conjecture holds
 if
$e_{1}^{2},\dots, e_{k}^{2}$ are diagonalizable (or if $e_{1},\dots,e_{k}$ are). \ref{abs:2} and \ref{abs:3} show it holds
when their rank is sufficiently large or sufficiently small. 
\end{abstract}

\section{Introduction}
Consider a family $e_{1},\dots, e_{k}$ of complex $n\times n$ matrices which  pairwise anticommute; i.e., $e_{i}e_{j}=-e_{j}e_{i}$ whenever $i\not=j$.
A standard example is a representation of a Clifford algebra, which gives an anticommuting family of $2\log_{2}n+1$ invertible matrices, if $n$ is a power of two (see Example 1).  This is known to be tight: if all the matrices $e_{1},\dots,e_{k}$ are invertible then $k$ is at most $2\log_{2}n+1$. 
(see \cite{Shapiro} and Theorem \ref{thm:intro1}). However, the situation is much less understood when the matrices are singular. As an example, take the following problem: 
\begin{question}\label{question} 
Assume that every $e_{i}$ has rank at least $2n/3$. Is $k$ at most $O(\log n)$? 
\end{question}
\noindent
 We expect the answer should be positive, though we can show only that $k\leq O(n)$. Such a problem can be solved under some extra assumptions. In \cite{clif}, it was shown that an anticommuting family of diagonalisable matrices can be ``decomposed'' into representations of Clifford algebras. This indeed answer Question \ref{question} if the $e_{i}$'s are diagonalisable. 
In this paper, we formulate a conjecture which relates the size of an anticommuting family with the rank of matrices in the family. We prove some partial results in this direction. In sum, the situation is clear when the matrices are diagonalisable, or their squares are diagonalisable, or even $\rk(e_{i}^{2})=\rk(e_{i}^{3})$. However, we can say very little about the case when the matrices are nilpotent.

One motivation for this study  is to understand sum-of-squares composition formulas. 
A sum-of-squares formula is an identity 
\begin{equation}
\label{Rsos}
(x_1^2+x_2^2+\cdots +x_k^2)\cdot (y_1^2+y_2^2+\cdots +y_k^2)= f_1^2+f_2^2+\cdots +f_n^2\,,
\end{equation}
where $f_{1},\dots,f_{n}$ are bilinear complex\footnote{The problem is often phrased over $\R$ when the bilinearity condition is automatic.} polynomials. We want to know how large must $n$ be in terms of $k$ so that such an identity exists. This problem has a very interesting history, and we refer the reader to the the monograph \cite{Shapiro} for details.  A classical result of Hurwitz \cite{Hurwitz} states that $n=k$ can be achieved 
only for $k \in \{1,2,4,8\}$. Hence, $n$ is strictly larger than $k$ for most values of $k$, but it is not known how much larger.
In particular, we do not known whether $n\geq \Omega(k^{1+\eps})$ for some $\eps>0$. In \cite{AmirAvi}, it was shown that such a lower bound would imply an exponential lower bound in a certain circuit model (while the authors obtained an $\Omega(n^{7/6})$ lower bound on \emph{integer} composition formulas in \cite{intercalate}). 
We point out that our conjecture about anticommuting families implies $n\geq \Omega(k^{2}/\log k)$, which would be tight. This connection is hardly surprising: already Hurwitz's theorem, as well as the more general Hurwitz-Radon theorem \cite{Hurwitz2,Radon}, can be proved by reduction to an anticommuting system. 

\section{The expected rank of anticommuting families}\label{sec:conjecture}

A family $e_{1},\dots,e_{k}$ of $n\times n$ complex matrices will be called \emph{anticommuting} if $e_{i}e_{j}=-e_{j}e_{i}$ holds for every \emph{distinct} $i,j\in \{1,\dots,k\}$. We conjecture that the following holds ($\rk(A)$ is the rank of the matrix $A$):

\begin{conjecture}\label{conjecture} Let $e_{1},\dots, e_{k}$ be an anticommuting family of $n\times n$ matrices. Then \[\sum_{i=1}^{k}\rk(e_{i}^{2})\leq O(n\log n)\,.\]
\end{conjecture}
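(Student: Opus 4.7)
The plan is to use the centrality of the squares $e_i^2$ to split off a nilpotent subfamily, handle the non-nilpotent part via \ref{abs:1}, and isolate the nilpotent subcase as the remaining question.

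From $e_i e_j = -e_j e_i$ one obtains $e_i^2 e_j = e_j e_i^2$, so in the algebra generated by $e_1,\dots,e_k$ every square $e_i^2$ is central. The commuting family $\{e_i^2\}$ therefore admits a joint Fitting decomposition
\[
V \;=\; \bigoplus_{S \subseteq [k]} V_S,
\]
where $e_i^2$ acts invertibly on $V_S$ when $i \in S$ and nilpotently when $i \notin S$. Since each $e_j$ commutes with every $e_i^2$, each $V_S$ is preserved by the whole family, and the target sum splits as
\[
\sum_{i=1}^{k}\rk(e_i^2) \;=\; \sum_i \sum_{S \ni i}\dim V_S \;+\; \sum_S \sum_{i \notin S}\rk\bigl(e_i^2|_{V_S}\bigr).
\]

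For the first sum, $e_i$ is invertible on each $V_S$ with $i \in S$, so $\rk(e_i^n|_{V_S}) = \dim V_S$, giving $\sum_{S \ni i}\dim V_S \le \rk(e_i^n)$; summing over $i$ and invoking \ref{abs:1} bounds this by $O(n \log n)$. For the second sum, on each $V_S$ the restrictions $\{e_i|_{V_S}\}_{i \notin S}$ form an anticommuting family of \emph{nilpotent} matrices (each $e_i|_{V_S}$ is nilpotent because its square is). The whole conjecture thus reduces to its nilpotent case: for any anticommuting family of nilpotent matrices on $\C^m$, $\sum_i \rk(e_i^2) \le O(m \log m)$; summing this bound over $S$ and using $\sum_S \dim V_S = n$ then recovers $O(n \log n)$.

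The main obstacle is therefore the nilpotent subcase, in which \ref{abs:1} is vacuous ($e_i^n = 0$). My approach would be induction on $n$, using as invariant subspace $W = \mathrm{Im}(e_1^{t-1})$, where $t$ is the nilpotency index of $e_1$: every $e_j$ preserves $W$ because it commutes or anticommutes with $e_1^{t-1}$ according to the parity of $t-1$, and $e_1$ vanishes on $W$. Applying induction to $W$ and to $V/W$ would give a recurrence of the form $T(n) \le T(n_1) + T(n_2) + O(n)$, and the principal difficulty is ensuring that the resulting splits $(n_1, n_2)$ are sufficiently balanced to yield $O(n \log n)$ rather than $\Omega(n^2)$. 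A complementary route would be to mimic the proof of \ref{abs:1} directly at the level of $e_i^2$, perhaps exploiting the sum-of-squares identity $\bigl(\sum_i x_i e_i\bigr)^2 = \sum_i x_i^2\, e_i^2$ to relate the individual ranks $\rk(e_i^2)$ to the generic rank of the pencil $\sum_i y_i\, e_i^2$; but I expect that a structural ingredient specific to nilpotent anticommuting families will be unavoidable.
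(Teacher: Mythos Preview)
This statement is Conjecture~\ref{conjecture}, which the paper leaves \emph{open}; there is no proof in the paper to compare against. The paper proves only the partial results \ref{abs:1}--\ref{abs:3} and explicitly singles out the nilpotent case as the one it cannot handle (see the remark after the Corollaries in Section~\ref{sec:conjecture} and the last paragraph of Section~\ref{sec:sos}). Your proposal is therefore not a proof of a theorem the paper proves, but an attempt at the conjecture itself, and you correctly recognise it as incomplete.

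Your reduction to the nilpotent case is correct and is essentially equivalent to what the paper obtains by a different route. The paper uses the notion of an irreducible family (Proposition~\ref{prop:irred}): in an irreducible family every $e_i$ is either invertible or nilpotent, and one then inducts over the direct-sum decomposition into irreducibles. You instead use the joint Fitting decomposition of the commuting squares $e_i^2$, which directly produces invariant pieces $V_S$ on which each $e_i$ is invertible or nilpotent. Both approaches yield the same conclusion: the invertible contribution is exactly $\sum_i\rk(e_i^n)$, bounded by \ref{abs:1}, and what remains is precisely the conjecture for nilpotent families. Your Fitting argument is arguably cleaner, since it bypasses the spectral analysis of Lemma~\ref{lem:normal}.

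Where your sketch has a genuine gap is the proposed induction in the nilpotent case. For an invariant subspace $W$ one has the block-triangular form $e_i^2=\begin{pmatrix}(e_i|_W)^2 & \ast\\ 0 & (e_i|_{V/W})^2\end{pmatrix}$, and the rank inequality for block-triangular matrices goes the \emph{wrong} way: $\rk(e_i^2)\ge \rk(e_i^2|_W)+\rk(e_i^2|_{V/W})$, with strict inequality possible. So even before worrying about balanced splits, the recurrence $T(n)\le T(n_1)+T(n_2)+O(n)$ is not justified; the ``off-diagonal'' excess $\sum_i\bigl(\rk(e_i^2)-\rk(e_i^2|_W)-\rk(e_i^2|_{V/W})\bigr)$ could a priori be as large as $k\cdot\min(n_1,n_2)$, and Theorem~\ref{thm:intro2} only gives $k=O(n)$. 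Controlling this excess, or finding an invariant subspace for which it is provably $O(n)$, is exactly the missing structural ingredient you allude to, and the paper offers no tool for it either.
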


The main motivation is the following theorem:
\begin{theorem}\label{thm:intro1}\cite{Shapiro} Let $e_{1},\dots, e_{k}$ be an anticommuting family of $n\times n$ invertible matrices. Then $k\leq 2\log_{2}n+1$.
The bound is achieved if $n$ is a power of two.
\end{theorem}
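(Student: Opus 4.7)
The plan is to reduce to the case $e_i^2 = I$ via a central-scaling trick, establish trace-orthogonality of the ``even-support'' monomials in the renormalized generators, and exhibit tightness by an inductive Pauli construction.

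First I would observe that $e_j e_i^2 = -e_i e_j e_i = e_i^2 e_j$, so each $e_i^2$ is central in the algebra generated by $e_1,\ldots,e_k$. Since $e_i^2$ is invertible, over $\mathbb{C}$ it admits a square root $S_i$ that is a polynomial in $e_i^2$ (by Jordan form / functional calculus); in particular $S_i$ commutes with every $e_j$. Setting $f_i := S_i^{-1} e_i$ yields an anticommuting family of invertibles with $f_i^2 = I$. For $A \subseteq \{1,\ldots,k\}$ write $m_A := \prod_{i \in A} f_i$ in increasing order of indices. A routine sign computation gives $m_A m_B = \varepsilon_{A,B}\, m_{A \triangle B}$ with $\varepsilon_{A,B} \in \{\pm 1\}$. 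The key trace identity to prove is $\mathrm{tr}(m_K) = 0$ for every nonempty $K \subsetneq \{1,\ldots,k\}$: if $|K|$ is odd, conjugating $m_K$ by any $f_{j_0}$ with $j_0 \notin K$ yields $(-1)^{|K|} m_K = -m_K$; if $|K|$ is even and nonempty, conjugating by any $f_{j_0}$ with $j_0 \in K$ (using $f_{j_0}^2 = I$ to cancel in the middle) yields $(-1)^{|K|-1} m_K = -m_K$. Either way, similarity invariance of the trace forces $\mathrm{tr}(m_K) = 0$.

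I would then restrict to the $2^{k-1}$ monomials $m_A$ with $|A|$ even. For distinct such $A,B$, the symmetric difference $A \triangle B$ is nonempty and of even cardinality. If $k$ is odd, $A \triangle B \neq \{1,\ldots,k\}$ automatically; if $k$ is even, the even-$|K|$ branch of the conjugation trick kills $\mathrm{tr}(m_{A \triangle B})$ even when $A \triangle B = \{1,\ldots,k\}$. In both cases $\mathrm{tr}(m_A m_B) = \varepsilon_{A,B}\,\mathrm{tr}(m_{A \triangle B}) = 0$, while $\mathrm{tr}(m_A^2) = \mathrm{tr}(\pm I) = \pm n \neq 0$. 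The Gram matrix is therefore diagonal and invertible, so the $2^{k-1}$ matrices $m_A$ are linearly independent in $M_n(\mathbb{C})$, giving $2^{k-1} \leq n^2$ and hence $k \leq 2\log_2 n + 1$. For tightness when $n = 2^m$, I would inductively tensor the Pauli matrices $X, Y, Z$: given an anticommuting family of size $2m-1$ on $\mathbb{C}^{2^{m-1}}$, left-tensor each generator with $Z$ and adjoin $X \otimes I^{\otimes(m-1)}$ and $Y \otimes I^{\otimes(m-1)}$, yielding $2m+1$ generators on $\mathbb{C}^{2^m}$. The main conceptual step is realizing one must discard the odd-support monomials, since for $k$ odd the element $m_{\{1,\ldots,k\}}$ is central and need not be traceless; everything else is careful sign bookkeeping.
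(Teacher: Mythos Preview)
Your argument is correct. The high-level strategy coincides with one of the paper's two proofs: show that the $2^{k-1}$ ordered products $e_A$ with $|A|$ even are linearly independent in $\C^{n\times n}$, forcing $2^{k-1}\le n^2$. The paper does this via Lemma~\ref{lem:V}, a purely combinatorial induction that peels off one index at a time and only uses that the products $\prod_{i\in A}e_i^2$ are nonzero; no normalization and no trace are involved. You instead first pass to generators with $f_i^2=I$ (legitimate, since a polynomial square root of an invertible matrix exists over $\C$ and the $e_i^2$ are central) and then prove linear independence by showing the trace form is diagonal on the even monomials. Your route is the classical Clifford-algebra argument and is arguably cleaner in the invertible case; the paper's Lemma~\ref{lem:V} is less slick here but is stated under the weaker hypothesis $\prod_{i\in A}e_i^2\neq 0$, which is exactly what the paper needs for the ``almost full rank'' generalization in Theorem~\ref{thm:main2}. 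The paper also gives a second, structurally different proof of the sharper bound $k\le 2q+1$ for $n=m2^q$ (Proposition~\ref{thm:exact1}): put $e_1$ in block-diagonal $\{\lambda,-\lambda\}$ form, observe the remaining $e_j$ become block-off-diagonal, and note that $e_2e_3,\ldots,e_2e_k$ form an anticommuting invertible family in $\C^{n/2\times n/2}$, then induct on $n$. Your Pauli tensor construction for tightness is the same as the paper's Example~1 in different notation. One cosmetic remark: in your Gram-matrix step you only ever need $\mathrm{tr}(m_K)=0$ for nonempty \emph{even} $K$, and your ``$j_0\in K$'' conjugation already covers that case uniformly (including $K=[k]$ when $k$ is even), so the parity split on $k$ and the odd-$|K|$ branch are unnecessary, though not wrong.
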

Under the assumption that $e_{i}^{2}$ are scalar diagonal matrices, this appears in \cite{Newman} (though it may have been known already to Hurwitz). As stated, it can be found in \cite{Shapiro} (Proposition 1.11 and Exercise 12, Chapter 1). There, an exact bound is given
\begin{equation}\label{eq:sha} k\leq 2q+1\,,\,\hbox{ if } n=m2^{q} \hbox{ with } m \hbox{ odd}\,.
\end{equation} 
Theorem \ref{thm:intro1} shows, first, that the Conjecture holds for invertible matrices and, second, that the purported upper bound cannot be improved: taking $2\log_2n+1$ full rank matrices gives $\sum \rk(e_{i}^{2})=(2\log_{2}+1)n$. 

A key aspect of Conjecture \ref{conjecture} is that  $\sum \rk(e_{i}^{2})$ is bounded in terms of a function of $n$ only. This would fail, had we counted $\sum\rk(e_{i})$ instead. For consider $2\times 2$ matrices
\[e_{i}=\left( \begin{array}{l r} 0 & a_{i}\\
0 & 0 \end{array}\right)\,,\,\,a_{i}\not=0\,.\]
They trivially anticommute (as  $e_{i}e_{j}=e_{j}e_{i}=0$), but $\sum_{i=1}^{k}\rk(e_{i})=k$, which can be arbitrarily large. However, we also have $e_{i}^{2}=0$ and this example is vacuous when counting $\sum\rk(e_{i}^{2})$. 
 The minimum requirement of the Conjecture is that every anticommuting family with non-zero squares is finite. This is indeed the case:
 
 \begin{theorem}\label{thm:intro2} Let $e_{1},\dots, e_{k}$ be an anticommuting family of $n\times n$ matrices with $e_{1}^{2},\dots,e_{k}^{2}\not=0$.
 Then $k\leq O(n)$
 \end{theorem}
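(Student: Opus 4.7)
The plan is to proceed by induction on $n$, using the observation that each $e_i^2$ commutes with every $e_j$: indeed, $e_j e_i^2 = -e_i e_j e_i = e_i^2 e_j$.

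If the family $\{e_1,\dots,e_k\}$ acts irreducibly on $V=\mathbb{C}^n$, then Burnside's theorem gives that the algebra generated by the $e_i$ equals $M_n(\mathbb{C})$, so its center is $\mathbb{C}\cdot I$. Each (central) $e_i^2$ is therefore a scalar, and the hypothesis $e_i^2\neq 0$ forces it to be a nonzero scalar --- so $e_i$ is invertible. Theorem \ref{thm:intro1} then yields $k\leq 2\log_2 n+1$, well within the claimed bound.

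If the family is reducible, take a minimal (hence irreducible) proper invariant subspace $W\subsetneq V$. The families induced on $W$ and on $V/W$ still anticommute. Classify each index as $A=\{i: e_i^2|_W\neq 0\}$, $B=\{i: e_i^2|_W=0,\ e_i^2|_{V/W}\neq 0\}$, and $C=\{i: e_i^2|_W=0 \text{ and } e_i^2|_{V/W}=0\}$. Applying the induction hypothesis on $W$ and on $V/W$ bounds $|A|+|B|\leq c\dim W+c\dim(V/W)=cn$, so the non-problematic indices already fit the target estimate.

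The main obstacle will be controlling the ``problematic'' set $C$. For $i\in C$ we have $e_i^2(V)\subseteq W$ and $e_i^2(W)=0$ (in particular $e_i^4=0$), so $e_i^2$ factors through a nonzero map $\phi_i:V/W\to W$. Irreducibility of $W$ forces $\mathrm{image}(e_i^2)=W$, so each $\phi_i$ is surjective; and the relation $e_j e_i^2=e_i^2 e_j$ makes $\phi_i$ an $A$-module homomorphism, so $\phi_i\in\mathrm{Hom}_A(V/W,W)$. Moreover, for $i,j\in C$ one has $e_i^2 e_j^2=0$ because $\mathrm{image}(e_j^2)\subseteq W\subseteq\ker e_i^2$. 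I expect the hardest part will be combining these ingredients --- the module-theoretic constraint on $\mathrm{Hom}_A(V/W,W)$ together with the anticommutation relations inside $C$ --- to bound $|C|=O(n)$, either by proving linear independence of the $\phi_i$, or by a Clifford-algebra-style argument applied to each fibre $\{e_i\in C: e_i^2=\phi\}$ for fixed $\phi$.
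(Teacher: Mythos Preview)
Your inductive framework is clean up to the point where you isolate the set $C$, but the proof stops exactly where the real content lies. Bounding $|C|$ is not a technical afterthought: in the paper's Example~2 one has $n\times n$ nilpotent matrices $e_1,\dots,e_{2n-4}$ with $e_i^2\neq 0$ and $e_i^3=0$. Take $W$ to be the span of the first standard basis vector (a $1$-dimensional invariant subspace). Then every $e_i$ kills $W$, and on $V/W$ each $e_i$ squares to zero, so \emph{every} index lands in $C$. Thus $|C|=2n-4$, and your reduction has gained nothing. In that same example $\dim\mathrm{Hom}_A(V/W,W)=1$: all the $\phi_i$ are proportional, so the ``linear independence of the $\phi_i$'' idea collapses, and the single fibre $\{e_i:\phi_i\in\C^*\phi\}$ is again the entire family, so the ``Clifford argument per fibre'' is just the original problem restated.

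There is also a structural problem with the induction itself. If the inductive hypothesis is $k\le cn$, then you obtain $|A|+|B|\le c\dim W + c\dim(V/W)=cn$; any nontrivial contribution from $|C|$ pushes you above $cn$, so the induction does not close even granting $|C|=O(n)$ by some external argument.

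For comparison, the paper's proof avoids induction and representation theory entirely. One picks generic vectors $u,v\in\C^n$ with $ue_i^2v^t\neq 0$ for all $i$, forms the $k\times k$ matrix $M=(ue_ie_jv^t)_{i,j}$, observes that $M$ factors as a product of a $k\times n$ and an $n\times k$ matrix (so $\rk M\le n$), while $M+M^t$ is diagonal with nonzero entries (so $\rk M\ge k/2$). This yields $k\le 2n$ in one stroke. The same rank trick, combined with Jacobson's simultaneous triangularisation of anticommuting nilpotents, is what eventually pins down the sharp bound $2n-3$; your $\phi_i$ machinery does not seem to give access to either ingredient.
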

\noindent
In Theorem \ref{thm:exact2}, we will show that $k\leq 2n-3$ if $n$ is sufficiently large, which is tight.
\begin{cor} $\sum_{i=1}^{k}\rk(e_{i}^{2})\leq O(n^{2})$
\end{cor}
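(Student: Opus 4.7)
The plan is to split the indices into those with vanishing square and those with non-vanishing square, and apply Theorem \ref{thm:intro2} to the latter.

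More precisely, let $I_{0}=\{i : e_{i}^{2}=0\}$ and $I_{1}=\{i : e_{i}^{2}\not=0\}$. For indices in $I_{0}$ the contribution to $\sum \rk(e_{i}^{2})$ is zero, so they may be discarded. The subfamily $\{e_{i}\}_{i\in I_{1}}$ is again anticommuting, and each of its squares is non-zero by construction, so Theorem \ref{thm:intro2} applies and yields $|I_{1}|\leq O(n)$.

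Next, I would use the trivial bound $\rk(e_{i}^{2})\leq n$ for each $i\in I_{1}$. Combining this with the bound on $|I_{1}|$ gives
\[
\sum_{i=1}^{k}\rk(e_{i}^{2}) \;=\; \sum_{i\in I_{1}}\rk(e_{i}^{2}) \;\leq\; |I_{1}|\cdot n \;\leq\; O(n)\cdot n \;=\; O(n^{2})\,,
\]
which is the desired conclusion.

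There is no real obstacle here: the corollary follows immediately from Theorem \ref{thm:intro2} together with the trivial rank bound $\rk(e_{i}^{2})\leq n$. The only thing to notice is the routine observation that restricting to a subset of indices preserves the anticommuting property, which is what allows Theorem \ref{thm:intro2} to be applied to the subfamily indexed by $I_{1}$.
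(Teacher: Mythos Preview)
Your proof is correct and is exactly the intended argument: the paper states this as an immediate corollary of Theorem~\ref{thm:intro2}, and the deduction you give (discard indices with $e_i^2=0$, bound the remaining count by $O(n)$, and use $\rk(e_i^2)\le n$) is precisely what is implicit there.
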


We will also show:
\begin{theorem}\label{thm:intro3} Let $e_{1},\dots, e_{k}$ be an anticommuting family of $n\times n$ matrices. Then \[\sum_{i=1}^{k}\rk(e_{i}^{n})\leq (2\log_{2}n+1)n\,.\]
\end{theorem}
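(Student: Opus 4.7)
The plan is to reduce the problem to Theorem \ref{thm:intro1} via a simultaneous Fitting decomposition of $\C^{n}$. For each $i$, let $V_{i}=\mathrm{Im}(e_{i}^{n})$ and $K_{i}=\ker(e_{i}^{n})$; the standard Fitting decomposition gives $\C^{n}=V_{i}\oplus K_{i}$, and $e_{i}$ restricts to an invertible map on $V_{i}$ and a nilpotent map on $K_{i}$. Since $\rk(e_{i}^{n})=\dim V_{i}$, the task is to bound $\sum_{i}\dim V_{i}$.

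The anticommutation relation yields $e_{i}^{n}e_{j}=(-1)^{n}e_{j}e_{i}^{n}$ for $j\neq i$, which immediately implies that \emph{both} $V_{i}$ and $K_{i}$ are invariant under every $e_{j}$. Let $P_{i}$ be the projection onto $V_{i}$ along $K_{i}$. I would next verify that the projections $P_{1},\dots,P_{k}$ pairwise commute: because each $e_{l}$ preserves the splitting $\C^{n}=V_{j}\oplus K_{j}$, the image and kernel of $P_{j}$ are each $P_{i}$-invariant, forcing $P_{i}P_{j}=P_{j}P_{i}$. Consequently, $\C^{n}$ decomposes into joint $0/1$-eigenspaces of the $P_{i}$'s,
\[\C^{n}=\bigoplus_{S\subseteq\{1,\dots,k\}}V_{S}\,,\quad V_{S}=\bigcap_{i\in S}V_{i}\;\cap\;\bigcap_{j\notin S}K_{j}\,.\]
Each $V_{S}$ is invariant under every $e_{l}$, and by construction $e_{i}|_{V_{S}}$ is invertible for $i\in S$ and nilpotent for $i\notin S$.

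Applying Theorem \ref{thm:intro1} to the anticommuting family $\{e_{i}|_{V_{S}}:i\in S\}$ of invertible matrices on $V_{S}$ yields $|S|\leq 2\log_{2}\dim V_{S}+1\leq 2\log_{2}n+1$ whenever $V_{S}\neq 0$ (the case $|S|=1$, $\dim V_S=1$ is consistent since two anticommuting invertible scalars cannot coexist). A straightforward double-count then gives
\[\sum_{i=1}^{k}\rk(e_{i}^{n})=\sum_{i=1}^{k}\dim V_{i}=\sum_{S}|S|\,\dim V_{S}\leq (2\log_{2}n+1)\sum_{S}\dim V_{S}=(2\log_{2}n+1)n\,,\]
as desired.

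The remaining verifications are routine linear algebra; the one place I would spend real care is showing that the Fitting projections $P_{i}$ commute, since they are not obviously polynomials in a shared matrix. The observation that $V_{i}$ and $K_{i}$ are simultaneously invariant under every $e_{j}$ — a direct consequence of anticommutation, together with the fact that $P_j$'s eigenspaces are exactly $V_j$ and $K_j$ — is what makes this step go through cleanly, after which the joint block decomposition and the reduction to the invertible case of Theorem \ref{thm:intro1} essentially write themselves.
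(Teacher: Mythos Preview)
Your proof is correct and takes a genuinely different route from the paper's. The paper proves Theorem~\ref{thm:intro3} (restated as Theorem~\ref{thm:main3}) by induction on $n$: it introduces a notion of \emph{irreducible} anticommuting family, shows via spectral analysis (Lemma~\ref{lem:normal} and Proposition~\ref{prop:irred}) that in an irreducible family every $e_i$ is either invertible or nilpotent, applies Theorem~\ref{thm:intro1} to the invertible ones, and handles the reducible case by splitting into two smaller families and invoking the inductive hypothesis. Your argument bypasses both the induction and the irreducibility machinery: the simultaneous Fitting decomposition produces the block structure in one shot, and on each block $V_S$ you land directly in the invertible case of Theorem~\ref{thm:intro1}. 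The one step you flag---commutativity of the Fitting projections $P_i$---is in fact immediate once you note that $P_i$ is a polynomial in $e_i$ (indeed in $e_i^2$, since $e_i$ and $e_i^2$ have the same Fitting splitting), so each $P_i$ preserves every $e_i$-invariant subspace, in particular $V_j$ and $K_j$; this is slightly cleaner than the phrasing you give. What your approach buys is directness and the avoidance of the Jordan-form block analysis of Lemma~\ref{lem:normal}; what the paper's approach buys is the structural Proposition~\ref{prop:irred}, which is reused elsewhere (e.g.\ in the exact bounds of Section~\ref{sec:exact}).
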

\noindent
This implies:
\begin{cor} Conjecture \ref{conjecture} holds whenever $\rk(e_{i}^{2})=\rk(e^{3}_{i})$ for every $e_{i}$ (this is guaranteed if $e_{i}^{2}$ is diagonalisable). 
\end{cor}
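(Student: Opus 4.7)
The plan is to reduce the corollary to Theorem \ref{thm:intro3} by showing that the hypothesis $\rk(e_{i}^{2})=\rk(e_{i}^{3})$ forces the whole sequence $\rk(e_{i}^{m})$ for $m\geq 2$ to stabilize. The core linear-algebra fact I would invoke is that for any square matrix $A$, the image chain $\mathrm{Im}(A)\supseteq\mathrm{Im}(A^{2})\supseteq\mathrm{Im}(A^{3})\supseteq\cdots$ is weakly decreasing, and once two consecutive images have the same dimension they are equal; applying $A$ to both sides then propagates the equality upward. Hence $\rk(A^{m})=\rk(A^{m+1})$ implies $\rk(A^{m})=\rk(A^{j})$ for every $j\geq m$.

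Applying this with $m=2$ to each $e_{i}$ under the hypothesis, I obtain $\rk(e_{i}^{2})=\rk(e_{i}^{j})$ for all $j\geq 2$, and in particular $\rk(e_{i}^{2})=\rk(e_{i}^{n})$ (assuming $n\geq 2$, which is harmless). Summing over $i$ and invoking Theorem \ref{thm:intro3} then gives
\[\sum_{i=1}^{k}\rk(e_{i}^{2})\;=\;\sum_{i=1}^{k}\rk(e_{i}^{n})\;\leq\;(2\log_{2}n+1)n\,,\]
which is the $O(n\log n)$ bound claimed by Conjecture \ref{conjecture}.

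For the parenthetical remark, I would verify that diagonalizability of $e_{i}^{2}$ entails the hypothesis $\rk(e_{i}^{2})=\rk(e_{i}^{3})$. If $B$ is diagonalizable, then $\rk(B)=\rk(B^{2})$, because in a basis of eigenvectors the nonzero diagonal entries of $B$ are exactly those of $B^{2}$. Taking $B=e_{i}^{2}$ gives $\rk(e_{i}^{2})=\rk(e_{i}^{4})$, and the monotonicity $\rk(e_{i}^{2})\geq \rk(e_{i}^{3})\geq \rk(e_{i}^{4})$ forces $\rk(e_{i}^{2})=\rk(e_{i}^{3})$. (If instead $e_{i}$ itself is diagonalizable, so is $e_{i}^{2}$, so this case is subsumed.) Since there is really no obstacle here beyond stating the two elementary rank-stabilization facts cleanly, the proof amounts to little more than citing Theorem \ref{thm:intro3}.
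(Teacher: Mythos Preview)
Your proof is correct and matches the paper's intent: the paper does not give an explicit argument for this corollary, merely stating ``This implies'' after Theorem~\ref{thm:intro3}, so your rank-stabilization argument (once $\rk(A^{m})=\rk(A^{m+1})$ the images coincide and applying $A$ propagates equality) is exactly the elementary step needed to reduce to Theorem~\ref{thm:intro3}. The verification of the parenthetical via $\rk(e_{i}^{2})=\rk(e_{i}^{4})$ and sandwiching $\rk(e_{i}^{3})$ is also clean and correct.
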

\noindent
Note that if already $e_{1},\dots,e_{k}$ are diagonalisable, we obtain $\sum_{i=1}^{k}\rk(e_{i})\leq (2\log_{2}n+1)n$.

We will also generalise Theorem \ref{thm:intro1}. In Theorem \ref{thm:main1}, we show that the assumption that $e_{i}$ have full rank can be replaced by the assumption that they have almost full rank. This, together with Theorem \ref{thm:intro2}. shows that Conjecture \ref{conjecture} holds if the $e_{i}^{2}$ have either rank close to $n$ or close to $\log n$.
Finally, note that the Conjecture implies positive answer to Question \ref{question}: if $\rk(e_{i})\leq 2n/3$ then $\rk(e_{i}^{2})\geq n/3$ and so we must have $k\leq O(\log n)$.

\paragraph{Notation and organisation}
$[k]:=\{1,\dots,k\}$. $\C^{n\times m}$ will denote the set of $n\times m$ complex matrices. For a matrix $A$, $\rk(A)$ is its the rank. Spectrum of a square matrix $A$, $\sc(A)$, is the set of its eigenvalues. $A$ is nilpotent if $A^{r}=0$ for some $r$ (or equivalently, $A^{n}=0$,  or $\sc(A)=\{0\}$). 

In Section \ref{sec:examples}, we give examples of anticommuting families. In Section \ref{sec:main}, we prove Theorems \ref{thm:intro2}, \ref{thm:intro3} and \ref{thm:main1}. In Section \ref{sec:exact}, we prove (\ref{eq:sha})  and determine the bound from Theorem \ref{thm:intro2} exactly. In Section \ref{sec:sos}, we outline the connection between our conjecture and the sums-of-squares problem.

We note that our results hold in any field of characteristic different from two.

\section{Examples of anticommuting families}\label{sec:examples}

We give two examples of anticommuting families. They achieve optimal parameters within its class. Example 1 gives the largest anticommuting family of invertible matrices (Theorem \ref{thm:intro1}), Example 2 the largest family of anticommuting matrices with non-zero squares if $n>4$ (Theorem \ref{thm:exact2}).

\paragraph{Example 1 - invertible matrices}
Suppose that $e_{1},\dots,e_{k} \in \C^{n\times n}$ are anticommuting matrices. Then the following is a family of $k+2$ anticommuting matrices of dimension $2n\times 2n$:
\begin{equation}\label{eq:?} \left( \begin{array}{l r} I_{n} & 0\\
0 & -I_{n} \end{array}\right)\,,\,\, \left( \begin{array}{l r} 0 & I_{n}\\
-I_{n} &0 \end{array}\right)\,,\,\, \left( \begin{array}{l r} 0 & e_{1}\\
e_{1} & 0 \end{array}\right)\,,\dots\,,\,\, \left( \begin{array}{l r} 0 & e_{k}\\
e_{k} & 0 \end{array}\right)\,.
\end{equation}
Starting with a single non-zero $1\times 1$ matrix, this construction can be applied iteratively to construct
a family of $2\log_{2}n+1$ anticommuting invertible $n\times n$ matrices whenever $n$ is a power of two.
Moreover, each matrix is diagonalizable.
If $n$ is not a power of two but rather of the form $m2^{q}$ with $m$ odd, we instead obtain
 $2q+1$ such matrices.

\paragraph{Example 2 - nilpotent matrices, plus one}
If $n\geq 2$, consider $n\times n$ matrices of the form 
\[e_{i}=\left( \begin{array}{l c r} 0 &u_{i}& 0\\
 &  & v_{i}^{t}\\ &&0 \end{array}\right)\,,\]
where $u_{i},v_{i}\in \C^{n-2}$ are row-vectors. Then 
\[e_{i}e_{j}=\left( \begin{array}{l c r}  0&0 & u_{i}v_{j}^{t}\\
 &  & 0\\ && 0\end{array}\right)\,, \]
and so $e_{i}e_{j}=-e_{j}e_{i}$ iff $u_{i}v_{j}^{t}=-u_{j}v_{i}^{t}$ and $e_{i}^{2}\not=0$ iff $u_{i}v_{i}^{t}\not =0$. 
Setting $r:=n-2$, it is easy to construct row vectors $u_{1},\dots,u_{2r}$, $v_{1},\dots, v_{2r}\in \C^{r}$ such that for every $i,j\in [2r]$
\[u_{i}v_{i}^{t}\not=0\,,\,\, u_{i}v_{j}^{t}=-u_{j}v_{i}^{t}\,\hbox{ if }i\not=j\,.\]
This gives an anticommutung family
\[e_{1},\dots, e_{2n-4} \in \C^{n\times n}\,,\]
where every $e_{i}$ is nilpotent but satisfies $e_{i}^{2}\not=0$.
Note that one can add one more matrix to the family: the diagonal matrix
\[e_{0}:=\left( \begin{array}{l c r} -1 && \\
 & I_{n-2} & \\ &&-1 \end{array}\right)\,.\]
This gives $2n-3$ anticommuting matrices with non-zero squares.

\section{Lower bounds on family size} \label{sec:main}

In this section, we prove our main theorems.  
A first observation to make is the following:
\begin{remark}\label{remark}
If $e_{1},\dots,e_{k}$ anticommute and $e_{1}^{2},\dots, e_{k}^{2}\not=0$ then they are linearly independent.
\end{remark} 
\noindent
To see this, assume that
$e_{1}=\sum_{j>1}^{k}a_{j}e_{j}$. Since $e_{1}$ anticommutes with every $e_{j},j>1$, we have 
 $e_{1}^{2}=e_{1}(\sum a_{j}e_{j})=-(\sum a_{j}e_{j})e_{1}= -e_{1}^{2}$ and hence $e_{1}^{2}=0$. 
 
 This means that $k\leq n^{2}$ if $e_{1},\dots,e_{k}\in \C^{n\times n}$. We first show that $k$ must actually be smaller. 
 
 \begin{theorem}\label{thm:main1}[Theorem \ref{thm:intro2} restated] Let $e_{1},\dots, e_{k}\in \C^{n\times n}$ be an anticommuting family with $e_{1}^{2},\dots. e_{k}^{2}\not=0$.
 Then $k\leq O(n)$
 \end{theorem}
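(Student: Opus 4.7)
By Remark~\ref{remark}, the matrices $e_{1},\dots,e_{k}$ are linearly independent in $\C^{n\times n}$, so trivially $k\leq n^{2}$. The key additional structure is that each square $f_{i}:=e_{i}^{2}$ commutes with every $e_{j}$: indeed $f_{i}e_{j}=e_{i}(e_{i}e_{j})=-e_{i}e_{j}e_{i}=e_{j}e_{i}^{2}=e_{j}f_{i}$. In particular the $f_{i}$'s pairwise commute, and I would exploit this by decomposing $\C^{n}$ along their joint eigenstructure and inducting on $n$.

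Write $\C^{n}=\bigoplus_{\alpha\in\C^{k}}W_{\alpha}$ for the common generalized eigenspace decomposition of the commuting family $\{f_{i}\}$; each $W_{\alpha}$ is invariant under every $e_{j}$. For each $j$, pick some $\alpha(j)$ with $e_{j}^{2}|_{W_{\alpha(j)}}\not=0$, which exists since $e_{j}^{2}\not=0$. If $\dim W_{\alpha}<n$ for every $\alpha$, then the inductive hypothesis applied to the restricted family $\{e_{j}|_{W_{\alpha}}:e_{j}^{2}|_{W_{\alpha}}\not=0\}$ (still anticommuting and with nonzero squares on $W_{\alpha}$) gives $|\{j:\alpha(j)=\alpha\}|\leq C\dim W_{\alpha}$, and summing over $\alpha$ yields $k\leq Cn$.

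It remains to handle a single block $W_{\alpha}=\C^{n}$, where every $f_{i}$ has a unique generalized eigenvalue $\lambda_{i}$. Partition $[k]$ into $J_{\text{inv}}=\{j:\lambda_{j}\not=0\}$ (where $e_{j}$ is invertible) and $J_{\text{nil}}=\{j:\lambda_{j}=0\}$ (where $e_{j}$ is nilpotent with $e_{j}^{2}\not=0$). Since $\{e_{j}\}_{j\in J_{\text{inv}}}$ is an anticommuting family of invertible matrices, Theorem~\ref{thm:intro1} gives $|J_{\text{inv}}|\leq 2\log_{2}n+1$. The theorem then reduces to showing $|J_{\text{nil}}|=O(n)$; equivalently, that any anticommuting family of nilpotent $n\times n$ matrices with $e_{i}^{2}\not=0$ has size $O(n)$.

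This nilpotent case is the main obstacle and will require the bulk of the work. My plan is: first, note that any non-empty product $e_{i_{1}}\cdots e_{i_{m}}$ squares up to sign to $f_{i_{1}}\cdots f_{i_{m}}$, a product of commuting nilpotents, hence is itself nilpotent; a normalization via the anticommuting relations shows more strongly that the non-unital associative algebra generated by the $e_{j}$'s is nilpotent as an algebra. Engel's theorem then provides a basis in which all $e_{j}$'s are simultaneously strictly upper triangular. By itself this yields only the loose bound $k\leq\binom{n}{2}$. Sharpening to $k=O(n)$ should come from a careful exploitation of the pairwise anticommuting constraints $e_{i}e_{j}+e_{j}e_{i}=0$ combined with $e_{i}^{2}\not=0$ on strict upper triangular matrices --- for instance by focusing on the extreme $(1,n)$ entry and the bilinear pairing it induces on the first rows and last columns of the $e_{j}$'s, in the spirit of Example~2, where that pairing forces $k\leq 2(n-2)$.
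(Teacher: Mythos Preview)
Your reduction via the joint generalized eigenspace decomposition of the squares $f_i=e_i^2$ is correct, and so is the split into invertible and nilpotent $e_i$'s on a single block. But the nilpotent case --- which you yourself flag as ``the main obstacle'' --- is not actually proved. The suggestion to look at the $(1,n)$ entry after simultaneous strict upper-triangularization does not work as stated: the bilinear form it induces is $\langle i,j\rangle:=(e_ie_j)_{1,n}$, and anticommutativity gives $\langle i,j\rangle=-\langle j,i\rangle$, but you have no control over the diagonal $\langle i,i\rangle=(e_i^2)_{1,n}$, which may well vanish even though $e_i^2\neq 0$. Without nonzero diagonal the antisymmetric matrix $(\langle i,j\rangle)_{i,j}$ tells you nothing about $k$. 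So the proposal contains no mechanism that actually bounds $k$ by $O(n)$.

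The paper's proof bypasses all of the decomposition and induction and works directly, in about five lines. Choose \emph{generic} vectors $u,v\in\C^n$ so that $ue_i^2v^t\neq 0$ for every $i$ (possible since each $e_i^2\neq 0$), and form the $k\times k$ matrix $M_{ij}:=ue_ie_jv^t$. On one hand $M=LR$ with $L=(ue_i)_{i}\in\C^{k\times n}$ and $R=(e_jv^t)_j\in\C^{n\times k}$, so $\rk(M)\leq n$. On the other hand $M_{ij}=-M_{ji}$ for $i\neq j$ and $M_{ii}\neq 0$, so $M+M^t$ is diagonal and invertible, forcing $\rk(M)\geq k/2$. Hence $k\leq 2n$. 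The genericity of $u,v$ is exactly the missing ingredient that repairs your $(1,n)$-entry idea; once you have it, there is no need for Engel's theorem, the invertible/nilpotent split, or the induction.
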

 
 In Theorem \ref{thm:exact2}, we will see that the correct bound is $2n-3$ if $n$ is sufficiently large.
 
 \begin{proof} First, there exist row-vectors $u,v\in \C^{n}$ such that
 $u e_{i}^{2}v^{t}\not=0\in \C$  for every $i\in [k]$.
 This is because we can view $ue_{i}^{2}v^{t}$ as a polynomial in the $2n$-coordinates of $u$ and $v$. If $e_{i}^{2}\not=0$, the polynomial is non-trivial, and so a generic $u,v$ satisfies $ue_{i}^{2}v^{t}\not =0$ for every $i\in [k]$.
  
 Let us define the $k\times k$ matrix $M$ by
  \[M_{ij}:= \{ue_{i}e_{j}v^{t}\}_{i,j\in[k]}\,.\] 
 Then $\rk(M)\leq n$. This is because $M$ can be factored as $M=L\cdot R$, where $L$ is $k\times n$ matrix with $i$-th row equal to $ue_{i}$ and $R$ is $n\times k$ with $j$-th column equal to $e_{j}v^{t}$. On the other hand, we have $\rk(M)\geq k/2$. This is because $M_{ii}\not=0$ and, since $e_{i} e_{j}=-e_{j}e_{i}$, $M_{ij}=-M_{ji}$ whenever $j\not=i$. Hence $M+M^{t}$ is a diagonal matrix with non-zero entries on the diagonal, $\rk(M+M^{t})=k$ and so $\rk(M)\geq k/2$. This gives $k/2\leq \rk(M)\leq n$ and so $k\leq 2n$.  
 \end{proof}
 
 Remark \ref{remark} can be generalised. For $A=\{i_{1},\dots,i_{r}\}\subseteq [k]$ with $i_{1}<\dots<i_{r}$,  let $e_{A}$ be the matrix $e_{i_{1}}e_{i_{2}}\cdots e_{i_{r}}$. 
 
 \begin{lemma}\label{lem:V} Let $e_{1},\dots, e_{k}$ be anticommuting matrices. For $p\leq k$, assume that for every $A\subseteq \{1,\dots, k\}$ with $|A|\leq p$ we have $\prod_{i\in A}e_{i}^{2}\not=0$. Then the matrices $e_{A}$, with $|A|\leq p$ and $|A|$ even,  are linearly independent (similarly with odd $|A|$).  
 \end{lemma}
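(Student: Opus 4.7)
The plan is to prove both parts (even and odd) by induction on $p$, directly generalizing the argument of Remark~\ref{remark}. The base case $p=1$ is trivial on the even side (the only $A$ with $|A|\le 1$ even is $A=\emptyset$, and $I\ne 0$) and is exactly Remark~\ref{remark} on the odd side.

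For the inductive step, assume the claim for $p-1$ and consider a purported linear dependence $\sum_{A\in\mathcal{I}} c_A e_A = 0$, where $\mathcal{I} = \{A\subseteq[k] : |A|\le p,\ |A|\text{ even}\}$ (the odd case is entirely parallel). The key structural fact is the commutation identity, which follows by a direct calculation from pairwise anticommutation: for any $A,B\in\mathcal{I}$,
\[
e_B e_A = (-1)^{|A\cap B|}\, e_A e_B,
\]
so $e_B$ commutes with $e_A$ exactly when $|A\cap B|$ is even, and anticommutes when $|A\cap B|$ is odd. Combined with the product formula $e_A e_B = \pm \bigl(\prod_{i\in A\cap B} e_i^2\bigr) e_{A\triangle B}$, which is proved by successively pulling paired letters together and out as $e_i^2$, this provides enough structure to adapt the Remark's argument.

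Fix any $A_0\in\mathcal{I}$ and split the dependence as $\sum c_A e_A = S+T$, where $S$ collects those $A$ with $|A\cap A_0|$ even and $T$ those with $|A\cap A_0|$ odd. Multiplying both $S+T=0$ on the left and on the right by $e_{A_0}$ and comparing (mirroring the step $e_1(\sum a_j e_j) = -(\sum a_j e_j)e_1$ in Remark~\ref{remark}, which there yields $e_1^2 = -e_1^2$) gives $T\cdot e_{A_0} = 0$. Expanding with the product formula, this becomes
\[
\sum_{|A\cap A_0|\text{ odd}} \pm\, c_A \Bigl(\prod_{i\in A\cap A_0} e_i^2\Bigr)\, e_{A\triangle A_0} \;=\; 0.
\]
Now introduce generic vectors $u,v\in\C^n$ satisfying $u\prod_{i\in B}e_i^2\, v^t \ne 0$ for every $B\subseteq[k]$ with $|B|\le p$; these exist by the standard genericity argument from the proof of Theorem~\ref{thm:main1}, since each such product is a nonzero matrix and we are imposing finitely many non-vanishing polynomial conditions on the $2n$ coordinates of $(u,v)$. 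Applying $u\cdot(\,\cdot\,)\cdot v^t$ to the derived identities, for varying $A_0$, produces a linear system in the $c_A$'s.

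The delicate step is to close the induction by forcing every $c_A=0$ from this system. The expansion above involves indices $A\triangle A_0$ of size up to $2p$, which lies outside the range where the inductive hypothesis for $p-1$ directly supplies linear independence. I expect to handle this by picking $A_0$ of minimal size among indices with $c_{A_0}\ne 0$: then every other $A$ contributing to $T$ satisfies $|A|\ge|A_0|$ and $A\cap A_0\ne\emptyset$, which combined with the genericity of $u,v$ reduces the remaining terms to linear combinations of $\{e_C:|C|\le p-1\}$ to which the inductive hypothesis can be applied, yielding $c_{A_0}=0$ and, by iteration, the whole linear independence.
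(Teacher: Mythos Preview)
Your commutation identity $e_Be_A=(-1)^{|A\cap B|}e_Ae_B$ (valid when $|A|,|B|$ are both even) and the resulting split $S+T=0$ with $Se_{A_0}=Te_{A_0}=0$ are correct, and this is indeed the mechanism behind the paper's argument. But the ``delicate step'' you flag is a genuine gap, and the route you sketch for closing it does not work.

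Two concrete obstructions. First, you aim to extract $c_{A_0}=0$ from $Te_{A_0}=0$, but $A_0$ never occurs in $T$: since $|A_0\cap A_0|=|A_0|$ is even, the term $A=A_0$ sits in $S$, so $Te_{A_0}=0$ carries no information about $c_{A_0}$ at all. Second, even if you pass to $Se_{A_0}=0$, the expansion has \emph{matrix} prefactors $\prod_{i\in A\cap A_0}e_i^2$ in front of the $e_{A\triangle A_0}$, so sandwiching by generic $u,v$ does not reduce the identity to a scalar linear relation among the $e_C$'s. And the promised size bound fails outright: with $|A_0|$ minimal and $|A\cap A_0|=1$ one gets $|A\triangle A_0|=|A|+|A_0|-2$, which for $|A|=p$ and $|A_0|=2$ equals $p$, not $\le p-1$; for $|A_0|\ge 4$ it exceeds $p$. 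So the inductive hypothesis on $p-1$ is not available.

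The paper's remedy changes both choices you made. It takes $A_0$ to be a \emph{largest} set with $a_{A_0}\ne 0$, and instead of multiplying by $e_{A_0}$ in one shot it peels off one index $i\in A_0$ at a time. At each step the $S/T$ split is simply ``$i\in A$'' versus ``$i\notin A$''; the parity trick kills one half, and from the surviving half a central factor $e_i^2$ is extracted. After $|A_0|$ iterations one obtains
\[
\Bigl(\prod_{i\in A_0}e_i^{2}\Bigr)\sum_{B\supseteq A_0} a_{B}\,e_{B\setminus A_0}=0,
\]
and maximality of $A_0$ collapses the sum to the single term $a_{A_0}$, giving $a_{A_0}\prod_{i\in A_0}e_i^{2}=0$, contradicting the hypothesis. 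The one-index-at-a-time iteration is exactly what keeps all prefactors central and prevents the blow-up in $|A\triangle A_0|$; no outer induction on $p$ and no genericity device are needed.
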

 
 \begin{proof} Suppose that we have a non-trivial linear combination $\sum_{A \hbox{ even}}a_{A}e_{A}=0$. 
 Let $A_{0}$ be a largest $A$ with $a_{A}\not=0$. We will show that $\prod_{i\in A_{0}}e_{i}^{2}=0$ holds. This implies the statement of the lemma for even $A$'s; the odd case is analogous.
The proof is based on the following observations. First, $e_{i}$ and $e_{j}^{2}$ always commute. Second, if $i\not\in A$ then $e_{i}e_{A}= (-1)^{|A|}e_{A}e_{i}$, i.e., $e_{A}$ and $e_{i}$ commute or anticommute depending on the parity of $|A|$. 
 
 Without loss of generality, assume that $A_{0}=\{1,\dots,q\}$.
 For $r\leq q$ and $z\in \N$ let $S_{r}(z):= \{A\subseteq \{r+1,\dots, k\}: |A|=z \mod 2\}$. 
 We will show that for every $0\leq r\leq q$,
 \begin{equation}\label{eq:1}
 e_{1}^{2}\cdots e_{r}^{2}\left(\sum_{A\in S_{r}(r)}a_{[r]\cup A}e_{A}   \right)=0\,.
 \end{equation}  
If $r=0$,  (\ref{eq:1}) is just the equality $\sum_{A \hbox{ even}}a_{A}e_{A}=0$. Assume (\ref{eq:1}) holds for some $r<q$, and we want to show it holds for $r+1$.   
 Collecting terms that contain $e_{r+1}$ and those that do not, (\ref{eq:1})
  can be rewritten as 
 where \begin{align*} 
 e_{1}^{2}\cdots e_{r}^{2}e_{r+1}\left(\sum_{A\in S_{r+1}(r+1)}a_{[r+1]\cup A}e_{A}\right)= - e_{1}^{2}\cdots e_{r}^{2}\left(\sum_{B\in S_{r+1} (r)}a_{[r]\cup B} e_{B}\right)\,.
  \end{align*}
  Let $f$ and $g$ be the left and right hand side of the last equality. 
Since $A$ range over sets of parity $(r+1)\mod2$ and $B$ over sets with parity $r\mod 2$,  we have $e_{r+1}f= (-1)^{r+1}fe_{r+1}$  and $e_{r+1}g=(-1)^{r}ge_{r+1}$. Since $f=g$, this gives $e_{r+1}f=-fe_{r+1}0$ and so $e_{r+1}f=0$. Hence, 
 \[e_{1}^{2}\cdots e_{r}^{2}e_{r+1}^{2}\sum_{A\in S_{r+1}(r+1)}a_{[r+1]\cup A}e_{A}\,,\]
 as required in (\ref{eq:1}). 
 Finally, if we set $r:=q$ in (\ref{eq:1}), we obtain $e_{1}^{2}\cdots e_{q}^{2}\cdot a_{A_{0}}= 0$ (recall that $A_{0}$ is maximal) and so $e_{1}^{2}\cdots e_{q}^{2}=0$, as required.   
 \end{proof}

 Part \ref{r2} of the following theorem is a generalisation of Theorem \ref{thm:intro1}. Note that  part \ref{r1} gives $k\leq O(\log n)$ whenever $r\geq n-O(n/\log n)$.  
 
 \begin{theorem}\label{thm:main2} Let $e_{1},\dots, e_{k}$ be anticommuting matrices in $\C^{n\times n}$ and $r:=\min_{i\in [k]}\rk(e_{i}^{2})$.
 \begin{enumerate}
 \item\label{r1} If $r> n(1-1/c)$ with $c\in \N$ then $k\leq cn^{2/c}$.
 \item\label{r2} If $r> n\left(1-\frac{1}{2(\log_{2}n+1)}\right)$ then $k\leq 2\log_{2}n+1$.
 \end{enumerate}
 \end{theorem}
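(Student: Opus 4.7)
My plan is to derive both parts from Lemma \ref{lem:V} by first verifying its non-degeneracy hypothesis via the rank inequality $\rk(AB)\geq \rk(A)+\rk(B)-n$, and then bounding the resulting linearly independent family in $\C^{n\times n}$ by subset counting.

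Since $e_ie_j=-e_je_i$ forces each $e_i^2$ to commute with every $e_j$, the matrices $e_1^2,\dots,e_k^2$ pairwise commute and $\prod_{i\in A}e_i^2$ is well-defined for every $A\subseteq[k]$. Iterating the rank inequality yields
\[\rk\Bigl(\prod_{i\in A}e_i^2\Bigr)\geq |A|\,r-(|A|-1)n,\]
which under the part (i) hypothesis $r>n(1-1/c)$ strictly exceeds $n(1-|A|/c)\geq 0$ for every $|A|\leq c$. Hence all such products are nonzero and Lemma \ref{lem:V} applies with $p=c$: the $e_A$ with $|A|\leq c$ of the parity of $c$ are linearly independent in $\C^{n\times n}$. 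Since this family contains the $\binom{k}{c}$ subsets of size exactly $c$, we obtain $\binom{k}{c}\leq n^2$, and combining with the elementary estimate $\binom{k}{c}\geq (k/c)^c$ (valid for $k\geq c$; the case $k<c$ is trivial) yields $k\leq cn^{2/c}$ and settles part (i).

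For part (ii), set $L:=\log_2 n+1$ so that the hypothesis becomes $r>n(1-1/(2L))$. The same rank computation, with $c$ replaced by $2L$, shows $\prod_{i\in A}e_i^2\neq 0$ for every $|A|\leq 2L$, and Lemma \ref{lem:V} then certifies that all even-size $e_A$ with $|A|\leq 2L$ are linearly independent. I would finish by case analysis on $k$. If $k\leq 2L$, every subset of $[k]$ has size $\leq 2L$, so the full collection of $2^{k-1}$ even subsets of $[k]$ gives linearly independent matrices, forcing $2^{k-1}\leq n^2$ and hence $k\leq 2\log_2 n+1$. Otherwise $k>2L$: picking any $T\subseteq[k]$ of size $\lfloor 2L\rfloor+1$, at least $2^{\lfloor 2L\rfloor}-1\geq 2n^2-1$ of its even subsets have size $\leq 2L$ and hence produce linearly independent matrices in the $n^2$-dimensional space $\C^{n\times n}$, a contradiction (for $n\geq 2$; the case $n=1$ being trivial).

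The proof is essentially mechanical once Lemma \ref{lem:V} is available, and I do not anticipate substantive obstacles. The only delicate point is the numerology of part (ii): the constant $1/(2(\log_2 n+1))$ in the hypothesis is tuned precisely so that the non-degeneracy of the squared products extends up to $|A|=2L$, which is just past the threshold $k\leq 2L$ at which the counting bound $2^{k-1}\leq n^2$ already collapses to the Hurwitz--Radon--Shapiro value $2\log_2 n+1$.
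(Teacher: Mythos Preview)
Your proposal is correct and follows essentially the same route as the paper: Sylvester's inequality to verify the nonvanishing hypothesis of Lemma~\ref{lem:V}, then the dimension bound $\binom{k}{c}\leq n^2$ (resp.\ $2^{k-1}\leq n^2$) on the resulting independent family. The only cosmetic difference is in part~(ii): the paper argues by contradiction and reduces without loss of generality to $2\log_2 n+1<k\leq 2\log_2 n+2$ before applying the lemma with $p=k$, whereas you split into the cases $k\leq 2L$ and $k>2L$ and in the latter restrict to a subset $T$ of size $\lfloor 2L\rfloor+1$; these are the same maneuver phrased differently.
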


 \begin{proof} \ref{r1}. By Sylvester's inequality, we have $\rk(\prod_{i\in A}e_{i}^{2})> n-|A|n/c$. Hence $\prod_{i\in A}e_{i}^{2}\not=0$ whenever $|A|\leq c$. By Lemma \ref{lem:V}, the matrices $e_{A}$, $A\subseteq [k]$, $|A|=c$, are linearly independent. Hence ${{k}\choose{c}}\leq n^{2}$ and the statement follows from the estimate ${{k}\choose{c}}\geq (k/c)^{c}$.
 
 In \ref{r2}, assume that $k> 2\log_{2}n+1$ and, without loss of generality, $k\leq 2\log_{2}n+2$. As above, we conclude $e_{1}^{2}\cdots e_{k}^{2}\not=0$. The lemma shows that the products $e_{A}$, with $|A|$ even, are linearly independent. This gives $2^{k-1}\leq n^{2}$ and so $k\leq 2\log_{2}n+1$, a contradiction.    
 \end{proof}

Before proving  Theorem \ref{thm:intro3}, we discuss general structure of anticommuting families. One way to obtain such a family is via a direct sum of simpler families. A family which cannot be so decomposed will be called irreducible. In Proposition \ref{prop:irred}, we will state some properties of irreducible families which allow to conclude the theorem. 

 If $A_{1}\in \C^{r_{1}\times r_{1}}$ and $A_{2}\in\C^{r_{2}\times r_{2}}$, let $A_{1}\oplus A_{2}$ be the $(r_{1}+r_{2})\times(r_{1}+r_{2})$ matrix
 \[A_{1}\oplus A_{2}= \left( \begin{array}{l r} A_{1} & 0\\
0 & A_{2} \end{array}\right)\,.\]
 A family $e_{1},\dots, e_{k}\in C^{n\times n}$ will be called \emph{reducible}, if there exists an invertible $V$ such that   
 \begin{equation}\label{eq:red}
 Ve_{i}V^{-1}=e_{i}(1)\oplus e_{i}(2)\,,\,\, i\in [k]
 \end{equation} 
 where $e_{1}(1),\dots, e_{k}(1)\in \C^{r_{1}\times r_{1}}$, $e_{1}(2),\dots, e_{k}(2)\in \C^{r_{2}\times r_{2}}$, with $0<r_{1}<n$ and $r_{1}+r_{2}=n$. 
If no such decomposition exists, the family will be called \emph{irreducible}. 
 
Note that the similarity transformation $Ve_{1}V^{-1},\dots, Ve_{k}V^{-1}$ preserves anticommutativity (and rank), and that $e_{1},\dots,e_{k}$ anticommutes iff both $e_{1}(1),\dots, e_{k}(1)$ and $e_{1}(2),\dots, e_{k}(2)$  do.

 \begin{lemma}\label{lem:normal} Let $A$ and $B$ be square matrices of the form 
 \[A=\left( \begin{array}{l r} A_{1} & 0\\
0 & A_{2} \end{array}\right)\,,\,\, B=\left( \begin{array}{l r} B_{1} & B_{3}\\
B_{4} & B_{2} \end{array}\right)\,, \]
 where $A_{1},B_{1}\in \C^{n\times n}$, $A_{2},B_{2}\in \C^{m\times m}$. If $AB=-BA$,  
 the following hold:
 \begin{enumerate}\item\label{normal:1} if there is no $\lambda$ such that $\lambda\in \sc(A_{1})$ and $-\lambda\in \sc(A_{2})$ then $B_{3}=0$ and $B_{4}=0$,
 \item\label{normal:2} if $\sc(A_{1})=\{\lambda_{1}\}$ and $\sc(A_{2})=\{\lambda_{2}\}$ for some $\lambda_{1},\lambda_{2}\not=0$ then $B_{1},B_{2}=0$. 
 \end{enumerate}
 \end{lemma}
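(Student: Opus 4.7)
The plan is to expand the anticommutation relation $AB = -BA$ block-by-block and then invoke the Sylvester-Rosenblum theorem on matrix equations. Since $A$ is block-diagonal, the identity $AB + BA = 0$ decomposes into four simultaneous equations:
\[A_1 B_1 + B_1 A_1 = 0,\quad A_1 B_3 + B_3 A_2 = 0,\quad A_2 B_4 + B_4 A_1 = 0,\quad A_2 B_2 + B_2 A_2 = 0.\]
Each of these has the Sylvester-type shape $PX = X(-Q)$ for appropriate choices of $P,Q \in \{A_1,A_2\}$, and the tool I would use is the standard fact that $PX = XR$ forces $X = 0$ whenever $\sc(P) \cap \sc(R) = \emptyset$. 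The proof of this fact is elementary: iteration gives $p(P)X = Xp(R)$ for every polynomial $p$, and taking $p$ to be the characteristic polynomial of $P$ makes the left side vanish while leaving $p(R)$ invertible under the disjointness hypothesis.

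For part \ref{normal:1}, rewrite $A_1 B_3 + B_3 A_2 = 0$ as $A_1 B_3 = B_3(-A_2)$. The disjointness $\sc(A_1) \cap \sc(-A_2) = \emptyset$ needed to apply the fact above is exactly the stated assumption that no $\lambda$ satisfies $\lambda \in \sc(A_1)$ and $-\lambda \in \sc(A_2)$, so $B_3 = 0$. Symmetrically, rewriting the equation for $B_4$ as $A_2 B_4 = B_4(-A_1)$ and using the same disjointness (now for $\sc(A_2) \cap \sc(-A_1)$, which reduces to the same condition after the substitution $\lambda \mapsto -\lambda$) yields $B_4 = 0$.

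For part \ref{normal:2}, the equations for the diagonal blocks read $A_i B_i = B_i(-A_i)$ for $i=1,2$, so the spectra I need to separate are $\sc(A_i) = \{\lambda_i\}$ and $\sc(-A_i) = \{-\lambda_i\}$. These are disjoint precisely because $\lambda_i \neq 0$ and the ground field has characteristic different from two, forcing $\lambda_i \neq -\lambda_i$. Applying the Sylvester fact once for each $i$ gives $B_1 = 0$ and $B_2 = 0$. I do not anticipate any real obstacle here: the lemma is essentially a bookkeeping exercise on Sylvester's equation, and the hypotheses of both parts are tailored precisely so that the relevant spectra avoid each other under the reflection $\lambda \mapsto -\lambda$ that anticommutativity introduces.
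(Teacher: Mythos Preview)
Your proof is correct and follows essentially the same route as the paper: expand $AB+BA=0$ into four block equations and kill each block via the Sylvester-type fact that $XZ=ZY$ with $\sc(X)\cap\sc(Y)=\emptyset$ forces $Z=0$. The only cosmetic difference is that you justify this auxiliary fact with the Cayley--Hamilton/polynomial argument, whereas the paper triangularises $Y$ and reads off the columns of $Z$; both are standard.
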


 \begin{proof} We first note the folowing:
 
 \begin{claim*} Let $X\in C^{p\times p}$, $Y\in \C^{q\times q}$ and $Z\in \C^{p\times q}$ be such that $XZ=ZY$. If $\sc{(X)}\cap\sc{(Y)}=\emptyset$ then $Z=0$.
 \end{claim*}
 \begin{proof} Without loss of generality, we can assume that $Y$ is upper triangular with its eigenvalues $\lambda_{1},\dots, \lambda_{r}$ on the diagonal.
 Let $v_{1},\dots, v_{q}$ be the columns of $Z$, and assume that some $v_{i}$ is non-zero. Taking the first such $v_{i}$ gives $Xv_{i}=\lambda_{i}v_{i}$ -- contradiction with $\lambda_{i}\not\in \sc{(X)}$. 
 \end{proof}
 Anticommutativity of $A$ and $B$ gives $A_{1}B_{3}=-B_{3}A_{2}$ and $A_{2}B_{4}=-B_{4}A_{1}$. If $A_{1}, A_{2}$ satisfy the assumption of
  \ref{normal:1}, we have $\sc(A_{1})\cap\sc(-A_{2})=\emptyset$ and so $B_{3},B_{4}=0$ by the Claim. We also have $A_{1}B_{1}=-A_{1}B_{1}$. If $A_{1}$ is as in \ref{normal:2}, we have $\sc(A_{1})\cap\sc(-A_{1})=\emptyset$ and so $B_{1}=0$; similarly for $B_{2}$.
 \end{proof}
 
 Given $A$ in Jordan normal form, Lemma \ref{lem:V} determines block-structure of $B$. For example, if $A$ is block-diagonal
 \[A=\left( \begin{array}{l c c r} A_{1} & &&\\
  & A_{2}&& \\
  &&A_{3}&\\
  &&&A_{4}\end{array}\right),\]
where $\sc(A_{1})=\{1\},\sc(A_{2})=\{-1\}$, $\sc(A_{3})=\{0\}$ and $\sc(A_{4})=\{2\}$. Then  
 \[B=\left( \begin{array}{l c c r}  0&B_{1} &&\\
 B_{2} &0 && \\
  &&B_{3}&\\
  &&&0\end{array}\right).\]

 \begin{prop}\label{prop:irred} Let $e_{1},\dots, e_{k}\in \C^{n\times n}$ be an irreducible anticommuting family. Then every $e_{i}$ is either invertible or nilpotent. Moreover,
 \begin{enumerate}
 \item\label{irred:2} for every $e_{i}$, $\sc(e_{i})\subseteq\{\lambda_{i},-\lambda_{i}\}$ for some $\lambda_{i}\in \C$,
 \item\label{irred:3} if at least two of the matrices are invertible then $n$ is even and the multiplicity of $\lambda_{i}$ is exactly $n/2$ in an invertible $e_{i}$.
 \end{enumerate}
 \end{prop}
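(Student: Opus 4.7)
The plan is to use Lemma \ref{lem:normal} to extract block decompositions compatible with the whole family, and then exploit irreducibility to force collapse to a single block. Since conjugation by an invertible matrix preserves anticommutativity, irreducibility, rank and spectrum, I am free to replace $e_1,\ldots,e_k$ at any point by a simultaneous similarity-conjugate, which I will do to put one of the matrices in convenient form.

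To prove \ref{irred:2}, I would fix an index $i$, put $e_i$ into Jordan normal form, and group its generalized eigenspaces into \emph{pair-blocks}: for each non-zero $\lambda\in\sc(e_i)$ I bundle the generalized eigenspaces of $\lambda$ and of $-\lambda$ together (taking $\lambda$ alone if $-\lambda\notin\sc(e_i)$), and I keep the generalized $0$-eigenspace as its own block. This yields $\C^n=W_1\oplus\cdots\oplus W_r$ in which no two distinct blocks share an eigenvalue related by negation. For each $j\neq i$ and each $p$, splitting $\C^n$ as $W_p\oplus(\bigoplus_{q\neq p}W_q)$ puts $e_i$ into the block-diagonal shape required by Lemma \ref{lem:normal}; the hypothesis of part \ref{normal:1} is precisely the absence of shared $\{\lambda,-\lambda\}$ pairs across the split, so the off-diagonal blocks of $e_j$ vanish and $W_p$ is $e_j$-invariant. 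Doing this for every $p$ simultaneously exhibits a non-trivial reduction of the whole family whenever $r>1$, so irreducibility forces $r=1$, yielding $\sc(e_i)\subseteq\{\lambda_i,-\lambda_i\}$. The invertible-or-nilpotent dichotomy is then immediate: $\lambda_i=0$ gives $e_i$ nilpotent, and $\lambda_i\neq 0$ gives $0\notin\sc(e_i)$ and hence $e_i$ invertible.

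For \ref{irred:3}, I would take two invertible matrices in the family, say $e_i$ and $e_j$, so that $\sc(e_i)=\{\lambda_i,-\lambda_i\}$ with $\lambda_i\neq 0$. After a further similarity I may assume $e_i$ is block-diagonal in Jordan form with a block $A_1$ of eigenvalue $\lambda_i$ of size $d$ and a block $A_2$ of eigenvalue $-\lambda_i$ of size $n-d$. Since $\lambda_i,-\lambda_i\neq 0$, Lemma \ref{lem:normal}\ref{normal:2} forces the diagonal blocks of $e_j$ in this split to vanish, leaving only rectangular off-diagonal blocks $B_3\in\C^{d\times(n-d)}$ and $B_4\in\C^{(n-d)\times d}$. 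Invertibility of $e_j$ makes both $B_3$ and $B_4$ have full rank; since they are rectangular, this forces $d=n-d$, so $n$ is even and the multiplicity of $\lambda_i$ in $e_i$ is exactly $n/2$.

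The main obstacle I foresee is the first step: Lemma \ref{lem:normal} is phrased for a single $2\times 2$ split, whereas the Jordan form of $e_i$ may carry many eigenvalue pairs. I would handle this by iterating the lemma in the ``one $W_p$ versus the rest'' pattern above, using that the pair-block construction is tailored exactly to validate the spectral hypothesis of \ref{normal:1} at each stage. Once that reduction is in hand, the dichotomy and part \ref{irred:3} follow routinely from a single further application of Lemma \ref{lem:normal}\ref{normal:2}.
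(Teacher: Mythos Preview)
Your proposal is correct and follows essentially the same route as the paper. The paper's proof of \ref{irred:2} is marginally more economical --- rather than building the full pair-block decomposition $W_1\oplus\cdots\oplus W_r$, it simply assumes $e_i$ has two eigenvalues $\lambda_1,\lambda_2$ with $\lambda_1\neq\pm\lambda_2$, splits $\C^n$ once into the $\{\lambda_1,-\lambda_1\}$-part versus the rest, and invokes Lemma~\ref{lem:normal}\ref{normal:1} a single time to produce a reduction --- but this is only a cosmetic shortcut of your argument. For \ref{irred:3} the paper argues via $\rk(e_j)\leq\min(2d,2(n-d))$ rather than your ``rectangular full-rank'' phrasing, which is the same computation; one small point you glide over is the degenerate case $d\in\{0,n\}$ (only one eigenvalue present), where $e_j$ would be forced to be zero and hence not invertible --- worth a half-sentence, but not a gap.
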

 
 \begin{proof} \ref{irred:2}. Assume that there is some $e_{i}$ with eigenvalues $\lambda_{1},\lambda_{2}$ with $\lambda_{1}\not=-\lambda_{2}$. After a suitable similarity transformation, we can assume that
 \[e_{i}=\left( \begin{array}{l r} e_{i}' & 0\\
0 & e_{i}'' \end{array}\right)\,,\]
where $e_{i}'\in \C^{r\times r}$ $e_{i}''\in \C^{(n-r)\times(n-r)}$ are such that $\sc(e_{i}')\subseteq\{\lambda_{1},-\lambda_{1}\}$ and $\sc(e_{i}'')\cap\{\lambda_{1},-\lambda_{1}\}=\emptyset$, for some $0<r<n$. Lemma \ref{lem:normal} part \ref{normal:1} gives that every $e_{j}$ is of the form
\[e_{j}=\left( \begin{array}{l r} e_{j}' & 0\\
0 & e_{j}'' \end{array}\right)\]
and hence the family is reducible.

\ref{irred:2} implies that every $e_{i}$ is either invertible or nilpotent.
For \ref{irred:3}, assume that $e_{i}$ is non-singular. By \ref{irred:2}, we have $\sc(e_{i})\subseteq \{\lambda_{i},-\lambda_{i}\}$ for some $\lambda_{i}\not=0$. Decompose $e_{i}$ as above, but with $\sc(e_{i}')=\{\lambda_{1}\}$ and $\sc(e_{i}'')=\{-\lambda_{i}\}$. Hence $r$ is the multiplicity of $\lambda_{i}$. The previous lemma part \ref{normal:2}
shows that every $e_{j}$, $j\not=i$, is of the form 
\[e_{j}=\left( \begin{array}{l r} 0& e_{j}'\\
e_{j}''& 0 \end{array}\right)\,,\]
where $e_{j}'$ is $r\times (n-r)$ and $e_{j}''$ is $(n-r)\times r$. Hence $e_{j}$ has rank at most $2r$ and also at most $2(n-r)$. If some $e_{j}$ is invertible, we must have $r=n/2$. 
 \end{proof}
 
 \begin{theorem}\label{thm:main3}[Theorem \ref{thm:intro3} restated] Let $e_{1},\dots, e_{k}\in \C^{n\times n}$ be an anticommutative family. Then $\sum_{i=1}^{k}\rk(e_{i}^{n})\leq (2\log_{2}n+1)n$. 
 \end{theorem}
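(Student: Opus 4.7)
The plan is to reduce to the irreducible case and then combine Proposition \ref{prop:irred} with Theorem \ref{thm:intro1}.

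First I would iteratively apply the reducibility decomposition (\ref{eq:red}) until the family splits, up to a common similarity transformation, as a direct sum of irreducible anticommuting families on subspaces of dimensions $n_1,\dots,n_s$ with $\sum_j n_j = n$. Since similarity preserves rank, and since for a block-diagonal matrix $e_i = \bigoplus_j e_i(j)$ one has $e_i^{n} = \bigoplus_j e_i(j)^{n}$ and $\rk(e_i^{n}) = \sum_j \rk(e_i(j)^{n})$, it suffices to bound $\sum_i \rk(e_i(j)^{n})$ for each irreducible block separately and add up.

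Fix such an irreducible block of dimension $m := n_j \leq n$, with anticommuting family $f_1,\dots,f_k \in \C^{m\times m}$. By Proposition \ref{prop:irred}, each $f_i$ is either invertible or nilpotent. If $f_i$ is nilpotent then $f_i^{m}=0$, hence $f_i^{n}=0$ and the matrix contributes nothing to the sum. The remaining invertible $f_i$'s form an anticommuting family of invertible $m\times m$ matrices, so Theorem \ref{thm:intro1} bounds their number by $2\log_2 m + 1$, and each contributes $\rk(f_i^{n}) = m$. Therefore
\[
\sum_{i=1}^{k}\rk(f_i^{n}) \;\leq\; (2\log_2 m + 1)\, m \;\leq\; (2\log_2 n + 1)\, n_j.
\]

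Summing this bound over all irreducible blocks gives
\[
\sum_{i=1}^{k}\rk(e_i^{n}) \;=\; \sum_{j=1}^{s}\sum_{i=1}^{k}\rk(e_i(j)^{n}) \;\leq\; (2\log_2 n + 1)\sum_{j=1}^{s} n_j \;=\; (2\log_2 n + 1)\, n,
\]
as claimed. The only step that requires anything beyond bookkeeping is the appeal to Proposition \ref{prop:irred}, which forces every matrix in an irreducible family to be either invertible or nilpotent — this is precisely what makes the exponent $n$ large enough to kill the nilpotent contributions, and is where one would have to be careful if attempting the same argument with $e_i^{2}$ in place of $e_i^{n}$.
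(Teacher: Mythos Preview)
Your proof is correct and essentially the same as the paper's: the paper phrases the reduction as an induction on $n$, peeling off one block at a time via (\ref{eq:red}), whereas you iterate the decomposition fully into irreducible blocks up front; the substance is identical, with Proposition \ref{prop:irred} forcing each matrix in an irreducible block to be invertible or nilpotent and Theorem \ref{thm:intro1} bounding the number of invertible ones.
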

 
 \begin{proof} Argue by induction on $n$. If $n=1$, the statement is clear. If $n>1$, assume first that the family is irreducible. 
  By Proposition \ref{prop:irred}, every $e_{i}$ is either invertible or nilpotent. If $e_{i}$ is nilpotent then $e_{i}^{n}=0$ and it contributes nothing to the rank. On the other hand, Theorem \ref{thm:intro1} asserts that there can be at most $2\log_{2}n+1$ anticommuting invertible matrices and so indeed
 $\sum_{i=1}^{k}\rk(e_{i}^{n})\leq (2\log_{2}n+1)n\,.$

 If the family is reducible, consider the decomposition in (\ref{eq:red}).
 By the inductive assumption, $\sum \rk(e_{i}(z)^{n})\leq \sum\rk(e_{i}(z)^{r_{z}})\leq (2\log_{2}r_{z}+1)r_{z}$ for both $z\in\{1,2\}$.
 Since $\rk(e_{i}^{n})=\rk(e_{i}(1)^{n})+\rk(e_{i}(2)^{n})$, we obtain
 \begin{align*}\sum_{i=1}^{k}\rk(e_{i}^{n})\leq&\sum_{i=1}^{k}\rk(e_{i}(1)^{r_{1}})+\sum_{i=1}^{k}\rk(e_{i}(2)^{r_{2}})\leq\\
 \leq& (2\log_{2}r_{1}+1)r_{1}+(2\log_{2}r_{2}+1)r_{2}\leq (2\log_{2}n+1)(r_{1}+r_{2})= 
\\
 = & (2\log_{2}n+1)n\,. \end{align*}
 \end{proof}

 \section{Some exact bounds}\label{sec:exact}
 For completeness, we now sketch a proof of (\ref{eq:sha}) from Section \ref{sec:conjecture}. We then prove the exact bound in Theorem \ref{thm:intro2}. 
 
 \begin{prop}\label{thm:exact1} Let $e_{1},\dots, e_{k}$ be an anticommutative family of invertible $n\times n$ matrices, where $n=m2^{q}$ with $m$ is odd.
 Then $k\leq 2q+1$.  
 \end{prop}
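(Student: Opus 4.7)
The plan is to argue by induction on $q$, where $n=m2^{q}$ with $m$ odd. For the base case $q=0$ (so $n$ is odd), I would reduce to $k\leq 1$ via a determinant argument: if $e_1,e_2$ were invertible and anticommuting, then $\det(e_1e_2)=\det(-e_2e_1)=(-1)^n\det(e_1e_2)=-\det(e_1e_2)$, forcing $\det(e_1e_2)=0$, contradicting invertibility.

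For the inductive step, I will assume $k\geq 2$ and first handle the reducible case. Write $e_i=e_i(1)\oplus e_i(2)$ of sizes $r_1,r_2$, with each sub-family anticommuting and invertible. Writing $r_z=m_z2^{q_z}$ with $m_z$ odd, the key observation will be that $\min(q_1,q_2)\leq q$: if $q_1<q_2$, then $n=2^{q_1}(m_1+m_2 2^{q_2-q_1})$ with the bracket odd, giving $q=q_1$; if $q_1=q_2=q'$, then $m_1+m_2$ is even, so $q>q'$. Applying the inductive hypothesis to whichever sub-family has the smaller $q_z$ then gives $k\leq 2\min(q_1,q_2)+1\leq 2q+1$.

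For the irreducible case, Proposition \ref{prop:irred} will guarantee that $e_1$ has spectrum $\{\lambda,-\lambda\}$ for some $\lambda\neq 0$, with each eigenvalue of multiplicity $n/2$ (so $q\geq 1$). Let $V_{\pm}$ denote the generalized $(\pm\lambda)$-eigenspaces, each of dimension $n/2=m\cdot 2^{q-1}$; anticommutation of $e_j$ ($j\geq 2$) with $e_1$ then swaps $V_+$ and $V_-$. The key construction is to set $f_j:=e_j e_2$ for $j\geq 3$; each $f_j$ then preserves the decomposition $V_+\oplus V_-$, and the restriction $f_j|_{V_+}$ is invertible. Using $e_2 e_j=-e_j e_2$ for $j\geq 3$, one checks $f_j f_l=-e_j e_l e_2^2=-(f_l f_j)$ for $j\neq l$, so the $f_j$'s anticommute. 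Applying the inductive hypothesis to the anticommuting family of $k-2$ invertible matrices $\{f_j|_{V_+}\}_{j=3}^{k}$ on $V_+$ will then yield $k-2\leq 2(q-1)+1$, i.e., $k\leq 2q+1$.

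The main obstacle I anticipate is the construction in the irreducible case: one needs a trick that simultaneously halves the dimension and removes exactly two matrices from the family, while preserving invertibility and anticommutativity. The choice $f_j=e_j e_2$ does this because multiplication by $e_2$ converts the off-diagonal block action of $e_j$ on $V_+\oplus V_-$ into a block-diagonal one, while the $e_2^2$ factors appearing in products of $f_j$'s cancel out of the anticommutator. I will also need to verify carefully that $f_j|_{V_+}$ is genuinely invertible, which follows because $f_j$ is invertible on the whole space and respects the direct sum decomposition.
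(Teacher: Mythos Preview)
Your approach matches the paper's: the heart of the argument---forming the products $e_je_2$ (the paper uses $e_2e_j$) in the irreducible case to obtain $k-2$ anticommuting invertible operators on an $(n/2)$-dimensional space---is exactly the paper's reduction, and your verification that these $f_j$ preserve $V_+$ and anticommute is correct. Your determinant argument for the base case $q=0$ and your $2$-adic computation $\min(q_1,q_2)\le q$ in the reducible case are both correct and more explicit than the paper's sketch.

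There is, however, a genuine gap in your induction scheme. You induct on $q$, but in the reducible case with $q_1<q_2$ you yourself show that $q=q_1$; so when you ``apply the inductive hypothesis to whichever sub-family has the smaller $q_z$'', that sub-family has $2$-adic valuation \emph{equal} to $q$, not strictly smaller, and your inductive hypothesis does not cover it. (The other sub-family has $q_2>q$, which is even worse.) The fix is immediate: induct on $n$ rather than on $q$, as the paper does. Since $r_1,r_2<n$, the hypothesis then applies to both sub-families, yielding $k\le 2q_z+1$ for each $z$, and your inequality $\min(q_1,q_2)\le q$ finishes exactly as you intended.
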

 The bound is achieved by Example 1
 
 \begin{proof}[Proof sketch] Argue by induction on $n$. If $n>1$, the non-trivial case is when the family is irreducible. If $k> 1$, we
 can assume that 
 \begin{equation}\label{eq:exact1}e_{1}=\left( \begin{array}{l r} e_{1}'&0\\
 0&e_{1}'' \end{array}\right)\,,\,\,e_{j}=\left( \begin{array}{l r} 0& e_{j}'\\
e_{j}''& 0 \end{array}\right)\,, \,\hbox{ if } j>1.
\end{equation}
 where $e_{i}',e_{i}''\in \C^{n/2\times n/2}$ are invertible. This is because, by Proposition \ref{prop:irred}, we can write  $e_{1}$ as in (\ref{eq:exact1}) with $\sc{(e_{1}')}=\{\lambda\}$, $\sc(e_{1}'')=\{-\lambda\}$, $\lambda\not =0$. Lemma \ref{lem:normal} part \ref{normal:2} gives that every 
$e_{j},j>1$ must indeed be of the form required in (\ref{eq:exact1}).
 If $e_{2},\dots, e_{k}$ anticommute then so do the $k-2$ matrices $e_{2}e_{3},e_{2}e_{4}, \dots, e_{2}e_{k}$.
 If $j>1$,  \[e_{2}e_{j}
=\left( \begin{array}{l r} e_{2}'e_{j}''& 0\\
0& e_{2}''e_{j}' \end{array}\right),\]
and so $e_{2}'e_{3}'',\dots, e_{2}'e_{k}''$ is a family of $k-2$ invertible anticommuting matrices in $\C^{n/2\times n/2}$. The inductive assumption gives $k-2\leq 2(q-1)+1$ and so $k\leq 2q+1$ as required. 
 \end{proof}

 For a natural number $n$, let $\alpha(n)$ denote the largest $k$ so that there exists an anticommuting family $e_{1},\dots, e_{k}\in \C^{n\times n}$ with $e_{1}^{2},\dots,e_{k}^{2}\not =0$.
 
 \begin{theorem}\label{thm:exact2}  
 \[\alpha(n)=\left\{ \begin{array}{l} 2n-1\,, \hbox{ if } n\in\{1,2\} \\ 2n-2\,, \hbox{ if } n\in\{3,4\}  \\ 2n -3\,,\hbox{ if }n> 4
 \end{array}\right.\]
 \end{theorem}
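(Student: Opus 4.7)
Lower bounds come from explicit constructions: Example 2 gives $\alpha(n)\geq 2n-3$ for $n\geq 5$; the three Pauli matrices give $\alpha(2)\geq 3$; and small tailored families handle $n\in\{1,3,4\}$. For the upper bound I would induct on $n$, handling $n\leq 4$ directly via the matrix $M_{ij}=ue_{i}e_{j}v^{t}$ from the proof of Theorem \ref{thm:main1}, together with Remark \ref{remark} and Lemma \ref{lem:normal}. For the inductive step with $n\geq 5$, I would split on reducibility. In the reducible case, write $Ve_{i}V^{-1}=e_{i}(1)\oplus e_{i}(2)$ with $r_{1}+r_{2}=n$; since $e_{i}^{2}\neq 0$ forces $e_{i}(z)^{2}\neq 0$ for at least one $z\in\{1,2\}$, we get $k\leq\alpha(r_{1})+\alpha(r_{2})$, and a finite check against the claimed closed form of $\alpha$ yields $\alpha(r_{1})+\alpha(r_{2})\leq 2n-3$ for every such split.

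In the irreducible case, Proposition \ref{prop:irred} says each $e_{i}$ is invertible or nilpotent. If at least two of the $e_{i}$ are invertible, diagonalise one such $e_{1}=e_{1}'\oplus e_{1}''$ with $\sc(e_{1}')=\{\lambda\}$ and $\sc(e_{1}'')=\{-\lambda\}$; Lemma \ref{lem:normal}\ref{normal:2} forces every other $e_{j}$ into block-antidiagonal form, and a direct block computation (as in the proof of Proposition \ref{thm:exact1}) shows that $\{e_{2}e_{j}\}_{j\geq 3}$ is an anticommuting family of $k-2$ matrices in $\C^{n/2\times n/2}$ with nonzero squares. Induction then gives $k\leq\alpha(n/2)+2\leq 2n-3$. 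The remaining subcase --- irreducible with at most one invertible --- is the technical heart: one must show that an irreducible anticommuting family of nilpotent matrices with $e_{i}^{2}\neq 0$ in $\C^{n\times n}$ has size at most $2n-4$. My plan is to bring the family into the normal form of Example 2,
\[e_{i}=\left(\begin{array}{ccc}0 & u_{i} & 0\\ 0 & 0 & v_{i}^{t}\\ 0 & 0 & 0\end{array}\right)\,,\]
with $(u_{i},v_{i})\in\C^{n-2}\times\C^{n-2}$, factoring every $e_{i}$ through a common one-dimensional kernel direction and a common one-dimensional direction in $\bigcap_{i}\ker(e_{i}^{t})$. The anticommutation and nonzero-square conditions then translate into $u_{i}v_{j}^{t}+u_{j}v_{i}^{t}=0$ for $i\neq j$ and $u_{i}v_{i}^{t}\neq 0$; the pairs $(u_{i},v_{i})$ form an orthogonal set of anisotropic vectors for the symmetric bilinear form $S((u,v),(u',v'))=uv'^{t}+u'v^{t}$ on $\C^{2(n-2)}$, hence there are at most $2(n-2)=2n-4$ of them.

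The \emph{main obstacle} is achieving this normal form: showing that irreducibility together with $e_{i}^{2}\neq 0$ in the all-nilpotent case genuinely forces a universal top and bottom direction through which every $e_{i}$ factors. Both hypotheses matter --- the conclusion fails for a single large nilpotent Jordan block with $e^{2}=0$, and irreducibility is needed to rule out direct-sum decompositions that would artificially inflate $k$ --- so the argument must exploit them in tandem. Once the normal form is in place, the bilinear-form counting above closes the argument, and combining it with the at-most-one invertible yields the promised $k\leq 2n-3$ for $n>4$.
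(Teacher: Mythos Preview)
Your overall skeleton---lower bounds from Examples~1 and~2, induction on $n$, the reducible/irreducible split, and the ``at least two invertible'' reduction to $\alpha(n/2)+2$---is exactly the paper's. The divergence, and the real gap, is in the all-nilpotent irreducible subcase.

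The normal form you are aiming for does not exist in general. In the Example~2 shape every $e_{i}$ has rank at most $2$, but an irreducible anticommuting family of nilpotent matrices with nonzero squares can contain matrices of larger rank. For instance, in $\C^{5\times 5}$ take $e_{1}=J_{5}$, the standard nilpotent Jordan block, and let $e_{2}$ have $(e_{2})_{i,i+1}=(-1)^{i-1}$ and all other entries zero. One checks directly that $e_{1}e_{2}=-e_{2}e_{1}$ and $e_{1}^{2},e_{2}^{2}\neq 0$; the pair is irreducible because a single Jordan block admits no nontrivial invariant direct-sum decomposition of $\C^{5}$. Yet $\rk(e_{1})=4$, so no similarity transformation can bring $e_{1}$ into your rank-$2$ shape. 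The ``common one-dimensional kernel and cokernel direction'' picture you are hoping for is simply false here, and your bilinear-form count never gets started.

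The paper handles this subcase by a different and much cheaper device. It invokes Jacobson's theorem that a family of pairwise anticommuting nilpotent matrices is simultaneously strictly upper-triangularisable, and then revisits the matrix $M=LR$ with $M_{ij}=ue_{i}e_{j}v^{t}$ from the proof of Theorem~\ref{thm:main1}: once every $e_{i}$ is strictly upper triangular, the first column of $L$ (whose $i$-th row is $ue_{i}$) and the last row of $R$ (whose $j$-th column is $e_{j}v^{t}$) vanish, so $\rk(M)\leq n-2$ rather than merely $\leq n$. Since $\rk(M)\geq k/2$ as before, this yields $k\leq 2(n-2)$ for the nilpotent part, and adding back the at most one invertible matrix gives $k\leq 2n-3$. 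This replaces your unproven (and, as the example shows, unprovable) structural normal form with a one-line rank refinement.
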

 
 The rest of this section is devoted to proving the theorem.
 
 \begin{lemma} If $n>1$, $\alpha(n)$ equals the maximum of the following quantities: a) $2n-3$, b) $\max_{0<r<n}(\alpha(r)+\alpha(n-r))$, c) $2+\alpha(n/2)$ (where we set $\alpha(n/2):=-1$ if $n$ is odd). 
 \end{lemma}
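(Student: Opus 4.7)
I plan to establish each direction separately. For the lower bound $\alpha(n)\geq\max\{(a),(b),(c)\}$, the three inequalities come from explicit constructions: (a) is Example~2; (b) follows by taking optimal families of sizes $\alpha(r)$ in $\C^{r\times r}$ and $\alpha(n-r)$ in $\C^{(n-r)\times(n-r)}$ and padding each matrix on the complementary block by zero (the combined list of $\alpha(r)+\alpha(n-r)$ block-diagonal matrices still pairwise anticommutes and retains non-zero squares); (c) is the doubling step of Example~1 applied to an optimal family in $\C^{n/2\times n/2}$, producing $\alpha(n/2)+2$ anticommuting matrices with non-zero squares (the two new matrices square to $\pm I$).

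For the upper bound, fix an anticommuting family $e_1,\dots,e_k$ with $k=\alpha(n)$ and $e_i^2\neq 0$. If the family is reducible, then after similarity $e_i=e_i(1)\oplus e_i(2)$ with block sizes $r_1+r_2=n$; since $e_i^2=e_i(1)^2\oplus e_i(2)^2\neq 0$, the sets $K_z:=\{i:e_i(z)^2\neq 0\}$ cover $[k]$, and each restricts to an anticommuting family of size at most $\alpha(r_z)$ in $\C^{r_z\times r_z}$, so $k\leq\alpha(r_1)+\alpha(r_2)$, giving (b). Otherwise the family is irreducible and Proposition~\ref{prop:irred} forces every $e_i$ to be invertible or nilpotent. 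If at least two $e_i$ are invertible, say $e_1,e_2$, I would follow the scheme of Proposition~\ref{thm:exact1}: by similarity assume $e_1=\mathrm{diag}(e_1',e_1'')$ with $\sc(e_1')=\{\lambda\}$ and $\sc(e_1'')=\{-\lambda\}$, so that Lemma~\ref{lem:normal}(ii) forces every $e_j$ ($j\geq 2$) to be block off-diagonal; a further conjugation by $\mathrm{diag}(I,a_2)$ then normalises $e_2$ to $\mathrm{offdiag}(I,B)$ with $B$ invertible. The relation $e_2e_j+e_je_2=0$ forces $e_j=\mathrm{offdiag}(a_j,-a_jB)$ with $a_j$ commuting with $B$ for $j\geq 3$; substituting, both diagonal blocks of $e_j^2$ equal $-a_j^2B$, so $e_j^2\neq 0$ yields $a_j^2\neq 0$, and $e_je_l+e_le_j=0$ similarly collapses to $a_ja_l+a_la_j=0$. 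Hence $\{a_j:j\geq 3\}$ is an anticommuting family of $k-2$ matrices in $\C^{n/2\times n/2}$ with non-zero squares, so $k-2\leq\alpha(n/2)$, giving (c).

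The remaining case — irreducible with at most one invertible matrix — is the main obstacle, and here I aim for $k\leq 2n-3$, matching (a). Theorem~\ref{thm:main1}'s argument (the $k\times k$ matrix $M_{ij}=ue_ie_jv^t$ factoring as $LR$ with $\rk(M)\leq n$ and $\rk(M+M^t)=k$) already yields $k\leq 2n$; to shave off three, I would exploit that at least $k-1$ of the $e_i$ are nilpotent, so each has proper image and nontrivial kernel, and moreover $\mathrm{Im}(e_i)$ and $\ker(e_i)$ are each invariant under the entire family (since $e_je_iw=-e_ie_jw$ for all $w$). Combined with irreducibility, this should restrict the row/column spaces of $L$ and $R$ enough to lose three units in the rank bound. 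Matching the exact bound $2n-3$ realised by Example~2 is the technical heart of the argument.
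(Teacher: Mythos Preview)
Your lower bound and the reducible upper-bound case match the paper exactly. For the irreducible case with two invertible matrices, your conjugation to $e_2=\mathrm{offdiag}(I,B)$ is a clean variant of the paper's argument (the paper works instead with the products $e_2'e_j''$ and checks $(e_2'e_j'')^2\neq 0$ directly from the anticommutation relations); both routes give $k-2\leq\alpha(n/2)$.

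The genuine gap is the final case (irreducible, at most one invertible). The paper does not try to shave three off the bound $\rk(M)\leq n$ by invoking irreducibility together with the invariance of $\mathrm{Im}(e_i)$ and $\ker(e_i)$. Instead it first discards the single invertible matrix (if present) and then invokes a theorem of Jacobson: a family of pairwise anticommuting \emph{nilpotent} matrices is simultaneously upper-triangularisable. Once every nilpotent $e_i$ is strictly upper triangular, each row $ue_i$ has first entry $0$ and each column $e_jv^t$ has last entry $0$; hence in the factorisation $M=LR$ the first column of $L$ and the last row of $R$ vanish, so $\rk(M)\leq n-2$ and the nilpotent subfamily has size at most $2(n-2)$. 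Adding back the one invertible matrix yields $k\leq 2n-3$.

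Your proposed mechanism does not get there on its own: irreducibility in this paper's sense rules out a simultaneous \emph{direct-sum} decomposition, not the existence of a single proper invariant subspace, so knowing that $\mathrm{Im}(e_i)$ or $\ker(e_i)$ is invariant yields no contradiction and no visible constraint on the factors $L$, $R$. The missing ingredient is exactly the simultaneous-triangularisation lemma for anticommuting nilpotents, which is what converts ``each $e_i$ nilpotent'' into the two lost dimensions in $\rk(LR)$.
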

 
 \begin{proof} That $\alpha(n)$ is at least the maximum is seen as follows. $\alpha(n)\geq \hbox{a)}$ is Example 2. $\alpha(n)\geq 2+\alpha(n/2)$ is seen from (\ref{eq:?}) in Example 1. For b), suppose we have two anticommuting families $e_{1}(z),\dots, e_{k_{z}}(z)\in \C^{r_{z}\times r_{z}}$, $z\in \{1,2\}$. Then the following is an anticommuting family of $(r_{1}+r_{2})\times(r_{1}+r_{2})$ matrices: $e_{1}(1)\oplus 0,\dots, e_{k_{1}}\oplus 0, 0\oplus e_{1}(2),\dots, 0\oplus e_{k_{2}}(2)$ (with $0\in \C^{r_{1}\times r_{1}}, \C^{r_{2}\times r_{2}}$ respectively). 
 
 We now prove the opposite inequality. Let $e_{1},\dots, e_{k}\in \C^{n\times n}$ be an anticommuting family with $e_{1}^{2},\dots, e_{k}^{2}\not=0$.
 We first prove two claims.
 
 \begin{claim}\label{claim1} If all the $e_{i}$'s are nilpotent then $k\leq 2(n-2)$.
 \end{claim}
 \begin{proof} By a theorem of Jacobson \cite{Jacobson}, see also \cite{Radjavi}, a family of anticommuting nilpotent matrices is simultaneously upper triangularisable. So let assume that $e_{1},\dots,e_{k}$ are upper triangular with zero diagonal, and proceed as in the proof of Theorem \ref{thm:main1}. For $M$ as defined in the proof, it is enough to show that $\rk(M)\leq n-2$, which gives $k\leq 2(n-2)$.  If the $e_{i}$'s are upper triangular with zero diagonal, we can see that the first column of $L$ and the last row of $R$ are zero. This means $\rk(M)=\rk(LR)\leq n-2$.  
 \end{proof}
 
 \begin{claim}\label{claim2} If $e_{1},e_{2}$ are invertible then $k\leq 2+\alpha(n/2)$.
 \end{claim}
   \begin{proof} As in the proof of Theorem \ref{thm:exact1}, we can assume that the matrices have the form (\ref{eq:exact1}). 
  Note that $e_{2}',e_{2}''$ are invertible and $e_{2}'e_{3}'',\dots, e_{2}'e_{k}''$ is an anticommuting family of $k-2$ matrices in $\C^{n/2\times n/2}$.
  If we show that $(e_{2}'e_{j}'')^{2}\not=0$ for every $j\in\{3,\dots, k\}$, we obtain $k-2\leq \alpha(n/2)$ as required.
  
Let $j\in \{3,\dots,k\}$. Anticommutativity of $e_{2}$ and $e_{j}$ gives $e_{2}'e_{j}''=-e_{j}'e_{2}''$ and $e_{2}''e_{j}'=-e_{j}''e_{2}'$. Hence
  \begin{align*} (e_{2}'e_{j}'')^{2}=e_{2}'e_{j}''e_{2}'e_{j}''&=e_{2}'(e_{j}''e_{2}')e_{j}''= - e_{2}'e_{2}''e_{j}'e_{j}''\,,\\
  &=e_{2}'e_{j}''(e_{2}'e_{j}'')= -e_{2}'e_{j}''e_{j}'e_{2}''\,.
  \end{align*}  
    If $(e_{2}'e_{j}'')^{2}=0$, the first equality gives $e_{j}'e_{j}''=0$ and the second $e_{j}''e_{j}'=0$ (recall that $e_{2}', e_{2}''$ are invertible).
  But since $e_{j}^{2}=e_{j}'e_{j}''\oplus e_{j}''e_{j}'$, this gives $e_{j}^{2}=0$ -- contrary to the assumption $e_{j}^{2}\not=0$.
   \end{proof}
   
   To prove the Lemma, assume first that $e_{1},\dots, e_{k}$ is irreducible. Then the $e_{i}$'s are either invertible  or nilpotent. If there is at most one invertible $e_{i}$, Claim \ref{claim1} gives  $k-1\leq 2(n-2)$, as in a). If at least two $e_{i}$'s are invertible, Claim \ref{claim2} gives $k\leq 2+\alpha(n/2)$, as in b).   
 If the family is reducible, write it as in (\ref{eq:red}). For $z\in\{1,2\}$, let $A_{z}:=\{i\in [k]: e_{i}(z)^{2}\not=0\}$. Then $A_{1}\cup A_{2}=[k]$ and so $k\leq    
 \alpha(r_{1})+\alpha(r_{2})$, as in c).  
 \end{proof}
 
 \begin{proof}[Proof of Theorem \ref{thm:exact2}]
 Using the Lemma, it is easy to verify that the theorem holds for $n\leq 4$. If $n>4$, the lemma gives $\alpha(n)\geq 2n-3$ and it suffices to prove the opposite inequality. Assume that  $n$ is the smallest $n>4$ such that $\alpha(n)> 2n-3$. 
 This means that for every $n'<n$, $\alpha(n')= 2n'-\eps(n')$ where $\eps(n')=1$ if $n'\in\{1,2\}$ and $\eps(n')>1$ otherwise. 
 Then either $\alpha(r)+\alpha(n-r)> 2n-3$ for some $0<r<n$, or $2+\alpha(n/2)> 2n-3$. The first case is impossible: we have
 $\alpha(r)+\alpha(n-r)= 2n -\eps(r)-\eps(n-r)$. But $\eps(r)+\eps(n-r)< 3$ implies $r,(n-r)\in\{1,2\}$ and so $n\leq 4$. If $2+\alpha(n/2)> 2n-3$ we have $2+2(n/2)-2\eps(n/2)> 2n-3$ and so $n< 5-2\eps(n/2)\leq 3$.
 \end{proof}
 
 \section{Sum-of-squares formulas}\label{sec:sos}
 We now briefly discuss the sum-of-squares problem. Let $\sigma(k)$ be the smallest $n$ so that there exists a sum-of-squares formula as in (\ref{Rsos}) from the Introduction.
 The following can be found in Chapter 0 of \cite{Shapiro}:
 
 \begin{lemma} $\sigma(k)$ is the smallest $n$ such that there exists $k\times n$ matrices $A_{1},\dots A_{k}$ which satisfy
 \[A_{i}A_{i}^{t}= I_{k}\,,\,\, A_{i}A_{j}^{t}=-A_{j}A_{i}^{t}\,,\,\hbox{ if }i\not=j\,,\]
for every $i,j\in [k]$. 
 \end{lemma}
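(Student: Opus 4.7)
The plan is to convert the polynomial identity (\ref{Rsos}) into the asserted matrix equations using the bilinear structure of each $f_\ell$, and then to observe that the conversion is reversible so that the two notions of ``smallest $n$'' coincide.

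First, I would package the bilinear forms into a single matrix identity. Since each $f_\ell(x,y)$ is bilinear, I can collect terms according to $y_j$ and write $f_\ell = \sum_{j=1}^{k} B_{\ell j}(x)\,y_j$, where each $B_{\ell j}(x)$ is a linear form in $x_1,\dots,x_k$. Assembling these into an $n\times k$ matrix $B(x)$ with $(\ell,j)$-entry $B_{\ell j}(x)$, the column vector $f=(f_1,\dots,f_n)^t$ satisfies $f = B(x)\,y$. Writing $B(x) = \sum_{i=1}^{k} x_i\, B_i$ with $B_i\in \C^{n\times k}$, one gets $\sum_\ell f_\ell^2 = y^t\, B(x)^t B(x)\, y$.

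Next, I would interpret (\ref{Rsos}) as the polynomial identity $y^t B(x)^t B(x)\, y = (x_1^2+\cdots+x_k^2)\, y^t I_k\, y$ in $x$ and $y$. Since $B(x)^t B(x)$ is symmetric and a symmetric matrix is determined by the quadratic form it defines (here I use char $\neq 2$), this forces
\[ B(x)^t B(x) \;=\; (x_1^2+\cdots+x_k^2)\, I_k \]
as a polynomial identity in $x$. Expanding $B(x)^t B(x) = \sum_{i,j} x_i x_j\, B_i^t B_j$ and comparing coefficients of $x_i^2$ and $x_i x_j$ (for $i\neq j$) then yields $B_i^t B_i = I_k$ and $B_i^t B_j + B_j^t B_i = 0$. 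Setting $A_i := B_i^t \in \C^{k\times n}$ converts these into exactly $A_i A_i^t = I_k$ and $A_i A_j^t = -A_j A_i^t$.

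Finally, the converse runs the same computation backward: given $A_1,\dots,A_k \in \C^{k\times n}$ with the stated relations, define $B_i := A_i^t$, $B(x) := \sum_i x_i B_i$, and $f := B(x)\,y$. Each $f_\ell$ is then bilinear in $(x,y)$ and the above calculation yields $\sum_\ell f_\ell^2 = (x_1^2+\cdots+x_k^2)(y_1^2+\cdots+y_k^2)$. This bijection between sum-of-squares formulas and matrix systems identifies $\sigma(k)$ with the smallest $n$ admitting such $A_i$'s. There is no real obstacle: the only substantive observation is that a bilinear polynomial is faithfully encoded by its matrix of coefficients, after which the proof is routine coefficient extraction.
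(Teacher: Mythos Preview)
Your argument is correct and is exactly the classical derivation: encode the bilinear $f_\ell$'s as $f=B(x)y$, pass to $B(x)^t B(x)=(\sum x_i^2)I_k$, and read off the Hurwitz matrix equations by comparing coefficients. The paper does not supply its own proof of this lemma---it merely cites Chapter~0 of Shapiro's monograph---so there is nothing further to compare; your proof is the standard one found there.
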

 
 The matrices from the lemma can be converted to anticommuting matrices, which provides a connection between the sum-of-squares problem and Conjecture \ref{conjecture}, as follows.
 
 \begin{prop} \begin{enumerate}
 \item\label{sos2} If $\sigma(k)=n$, there exists an anticommuting family $e_{1},\dots, e_{k}\in \C^{(n+2k)\times(n+2k)}$ such that $\rk(e_{1}^{2}),\dots, \rk(e_{k}^{2})= k$. (Moreover, we have $e_{1}^{2}=e_{2}^{2}\dots=e_{k}^{2}$ and $e_{1}^{3},\dots,e_{k}^{3}=0$.)
 \item Hence, Conjecture \ref{conjecture} implies $\sigma(k)=\Omega(k^{2}/\log k)$.
 \end{enumerate}
 \end{prop}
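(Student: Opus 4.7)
For part~\ref{sos2}, the plan is to use the matrices $A_{1},\dots,A_{k}\in\C^{k\times n}$ supplied by the preceding Lemma and assemble them into $(n+2k)\times(n+2k)$ matrices with a $k+n+k$ block partition. Specifically, I would set
\[
e_{i}:=\left(\begin{array}{ccc} 0 & A_{i} & 0\\ 0 & 0 & A_{i}^{t}\\ 0 & 0 & 0\end{array}\right),\qquad i\in[k],
\]
so that each $e_{i}$ is strictly block upper triangular (hence $e_{i}^{3}=0$ automatically). A direct block multiplication gives
\[
e_{i}e_{j}=\left(\begin{array}{ccc} 0 & 0 & A_{i}A_{j}^{t}\\ 0 & 0 & 0\\ 0 & 0 & 0\end{array}\right),
\]
and using the defining relation $A_{i}A_{j}^{t}=-A_{j}A_{i}^{t}$ for $i\not=j$ this gives $e_{i}e_{j}=-e_{j}e_{i}$. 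Likewise $e_{i}^{2}$ equals the matrix with $A_{i}A_{i}^{t}=I_{k}$ in the top-right $k\times k$ block and zeros elsewhere; in particular all the $e_{i}^{2}$ coincide, have rank exactly $k$, and their products with any $e_{j}$ vanish (which re-verifies $e_{i}^{3}=0$). This takes care of the construction and the ``moreover'' clause.

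For the bound $\sigma(k)=\Omega(k^{2}/\log k)$, I would simply plug the family of part~\ref{sos2} into Conjecture~\ref{conjecture}. The family lives in $\C^{N\times N}$ with $N:=\sigma(k)+2k$, and
\[
\sum_{i=1}^{k}\rk(e_{i}^{2})=k\cdot k=k^{2}.
\]
Conjecture~\ref{conjecture} would yield $k^{2}\le O(N\log N)$. Since $\sigma(k)\ge k$ (a trivial lower bound, as a sum-of-squares formula must involve at least $k$ bilinear polynomials; alternatively this is in \cite{Shapiro}), we have $N\le 3\sigma(k)$, and since $\sigma(k)\le k^{2}$ is well known we have $\log N=O(\log k)$. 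Rearranging, $\sigma(k)\ge \Omega(k^{2}/\log k)$.

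The construction in part~\ref{sos2} is the only substantive step; once the block form is chosen, every verification is a one-line matrix computation, so I don't foresee a real obstacle. The one subtlety worth noting is the dimension overhead of $2k$, which is absorbed freely because the conjectured bound is $O(N\log N)$ and $\sigma(k)\ge k$; if one ever wanted a matching upper bound of the form $\sigma(k)\le O(k^{2}/\log k)$ this overhead would need to be shaved.
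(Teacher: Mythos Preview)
Your construction and argument are essentially identical to the paper's: the same strictly block upper triangular $e_i$ with block sizes $k+n+k$, the same one-line computation of $e_ie_j$, and the same plugging of $\sum_i \rk(e_i^2)=k^2$ into the conjecture with $N\le 3\sigma(k)$. Your write-up is in fact slightly more explicit (you spell out why $\log N=O(\log k)$), but there is no substantive difference.
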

 \begin{proof} Take the $(2k+n)\times (2k+n)$ matrices (with $0\in \C^{k\times k}$)
 \[e_{i}:=\left( \begin{array}{c c c}0& A_{i} & 0\\
& & A_{i}^{t}\\ &&0 \end{array}\right),\, i\in [k]\,.\]
The matrices have the required  properties as seen from  
 \[e_{i}e_{j}=\left( \begin{array}{c c c}0&0  & A_{i}A_{j}^{t}\\
& & 0\\ &&0 \end{array}\right).\] 
 We have $\sum_{i=1}^{k}\rk(e_{i}^{2})=k^{2}$. As $2k+n\leq 3n$, the Conjecture gives $k^{2}=\sum_{i=1}^{k}\rk(e_{i}^{2})\leq O(3n\log(3 n))$ and so $n\geq \Omega(k^{2}/\log k)$.
 \end{proof}
 
 We can see that the matrices obtained in \ref{sos2} are nilpotent, which is exactly the case of Conjecture \ref{conjecture} we do not know how to handle.
 Finally, let us note that part \ref{sos2} is too generous if $\sigma(k)=k$. In this case, we can actually obtain $k-1$ invertible anticommuting matrices in $\C^{k\times k}$. Again following \cite{Shapiro}, let 
 \[e_{1}:= A_{1}A_{k}^{t}\,,\,\,e_{2}:=A_{2}A_{k}^{t}\,,\dots\,,\,\, e_{k-1}:=A_{k-1}A_{k}^{t}\,. \]
 They anticommute, as seen from $A_{i}A_{k}^{t}A_{j}A_{k}^{t}=- A_{i}A_{k}^{t}A_{k}A_{j}^{t}=-A_{i}A_{j}^{t}$ (note that $A_{k}A_{k}^{t}=I$ implies $A_{k}^{t}A_{k}=I$ for square matrices). This is one way how to obtain Hurwitz's $\{1,2,4,8\}$-theorem: if $\sigma(k)=k$, we have $k-1$ invertible anticommuting matrices in $\C^{k\times k}$. By Theorem \ref{thm:main1}, this gives $k-1\leq 2\log_{2}k+1$  and hence $k\leq 8$. Furthermore, the precise bound in (\ref{eq:sha}) rules out the $k$'s which are not a power of two. 
 
 \bibliographystyle{plain}	
\bibliography{myrefs}

\end{document}